\title{Descriptive Complexity of $\cAC$ Functions%
%Counting Classes and Logic%
\thanks{Partially supported by DFG VO 630/8-1, grant  292767 of the Academy of Finland and grant ANR-14-CE25-0017-02 AGREG of the ANR}}
\author{Arnaud Durand\inst{1} \and Anselm Haak\inst{2} \and Juha Kontinen\inst{3} \and Heribert Vollmer\inst{2}}
\institute{Université Paris Diderot, IMJ-PRG, CNRS UMR 7586, Case 7012, 75205 Paris cedex 13, France\\\texttt{durand@logique.jussieu.fr}
\and Theoretische Informatik, Leibniz Universität Hannover, Appelstraße, D-30167, Germany\\\texttt{(haak|vollmer)@thi.uni-hannover.de}
\and Department of Mathematics and Statistics, University of Helsinki, Finland\\\texttt{juha.kontinen@helsinki.fi}}
\begin{document}
\maketitle

\pagestyle{plain}

\begin{abstract}
We introduce a new framework for a descriptive complexity approach to arithmetic computations. We define a hierarchy of classes based on the idea of counting assignments to free function variables in first-order formulae. We completely determine the inclusion structure and show that \cP and \cAC appear as classes of this hierarchy. In this way, we unconditionally place \cAC properly in a strict hierarchy of arithmetic classes within \cP. We compare our classes with a hierarchy within \cP defined in a model-theoretic way by Saluja et al. We argue that our approach is better suited to study arithmetic circuit classes such as \cAC which can be descriptively characterized as a class in our framework.
\end{abstract}

\section{Introduction}

The complexity of arithmetic computations is a current focal topic in  complexity theory. Most prominent is Valiant's class \cP of all functions that count accepting paths of nondeterministic polynomial-time Turing machines. This class has interesting complete problems like counting the number of satisfying assignments of propositional formulae or counting the number of perfect matchings of bipartite graphs (the so called permanent \cite{val79b}). 

The class \cP has been characterized in a model-theoretic way by Saluja, Subrahmanyam and Thakur in \cite{DescCompNumP}. Their characterization is a  natural generalization of Fagin’s Theorem: Given a first-order formula with a free relational variable, instead of asking if there exists an assignment to this variable that makes the formula true (NP $=$ ESO), we now ask to count how many assignments there are. In this way, the class \cP is characterized: $\cP = \cFO^\rel$. We use the superscript $\rel$ to denote that we are counting assignments to relational variables.

From another point of view, the class \cP can be seen as the class of those functions that can be computed by arithmetic circuits of polynomial size, i.e., circuits with plus and times gates instead of the usual Boolean gates (cf., e.g., \cite{HeribertBuch}). This is why here we speak of \emph{arithmetic computations}. 

It is very natural to restrict the resource bounds of such arithmetic circuits. An important class defined in this way is the class $\cAC$ of all functions computed by polynomial-size bounded-depth arithmetic circuits. It is interesting to note that \cAC and all analogous classes defined by arithmetic circuits, i.e., plus-times circuits, can also be defined making use of a suitable counting process: A witness that a Boolean circuit accepts its input is a so called proof tree of the circuit, i.e., a minimal subtree of the circuit unwound into a tree, in which all gates evaluate to $1$. Then the arithmetic class \cAC can be characterized as the counting class of all functions that count proof trees of \AC circuits.

There was no model-theoretic characterization of \cAC, until it was recently shown in \cite{HaVo16} that $\cAC = \cPiSk$, where $\cPiSk$ means counting of possible Skolem functions for \FO-formulae.

The aim of this paper is to compare the above two model-theoretic characterizations  in order to get a unified view for both arithmetic circuit classes, \cAC and \cP. This is done by noticing that the number of Skolem functions of an \FO-formula can be counted as satisfying assignments to free function variables in a $\Pi_1$-formula. This gives rise to the idea to restate the result by Saluja et al counting functions instead of relations. We call our class where we count assignments to function variables \cFO, in contrast to Saluja et al.'s $\cFO^\rel$. In this setting, we get $\cP = \cFO = \cPii{1}$, which places both classes within $\cPii{1}$.

Furthermore, we show that $\cAC$  actually corresponds to a syntactic fragment $\cPiPrefix$ of $\cPii{1}$ and, considering further syntactic subclasses of \cFO defined by quantifier alternations, we get the inclusions
\begin{equation}\label{eqInclusions}
%\cSigmai{0} \subsetneq \cAC = \cPiPrefix \subsetneq \cPii{1} = \cFO = \cP
\cSigmai{0} 
\begin{array}{@{\ }r@{\ }c@{\ }l@{\ }}
\rotatebox[origin=c]{30}{$\subsetneq$} 
& \raisebox{.75ex}{\cAC = \cPiPrefix} 
& \rotatebox[origin=c]{-30}{$\subsetneq$} 	\\
\rotatebox[origin=b]{-20}{$\subsetneq$} 
& \raisebox{-.75ex}{\cSigmai{1}} 
& \rotatebox[origin=c]{20}{$\subsetneq$}	
\end{array}
\cPii{1} = \cFO = \cP
\end{equation}
Thus we establish (unconditionally, i.e., under no complexity theoretic assumptions) the complete structure of the alternation hierarchy within \cFO and show where \cAC is located in this hierarchy. 

Once we know that only universal quantifiers suffice to obtain the full class, i.e., $\cPii{1}=\cP$, it is a natural question to ask how many universal quantifiers are needed to express certain functions. We obtain the result that the hierarchy based on the number of universal variables is infinite; however, a possible connection to the depth hierarchy within \cAC remains open.

We see that counting assignments to free function variables instead of relation variables in first-order formulae leads us to a hierarchy of arithmetic classes suitable for a study of the power and complexity of the class \cAC. The hierarchy introduced by Saluja et al.{} \cite{DescCompNumP} does not seem suitable for such a goal.

This paper is organized as follows: In the next section, we introduce relevant concepts from finite model theory. Here, we also introduce the Saluja et al.{} hierarchy, and we explain the model-theoretic characterization of \cAC. In Sect.~\ref{sect-framework} we introduce our new framework and the class \cFO and its subclasses. In Sect.~\ref{sect-hierarchy} we determine the full structure of the alternation hierarchy within \cFO and place \cAC in this hierarchy, while in Sect.~\ref{sect-varhierarchy} we study the hierarchy defined by the number of universal variables in the $\cPii{1}$-fragment. 
In Sect.~\ref{sect-connection} we turn to the hierarchy defined by Saluaj et al.{} and show that the arithmetic class \cAC is incomparable to all except the level-0 class and the full class of this hierarchy. Finally, we conclude in Sect.~\ref{sect-conclusion} with some open questions.

Our proofs make use of a number of different results and techniques, some stemming from computational complexity theory (such as separation of Boolen circuit classes or the time hierarchy theorem for nondeterministic RAMs), some from model theory (like closure of certain fragments of first-order logic under extensions or taking substructures) or descriptive complexity (correspondence between time-bounded NRAMs and fragments of existential second-order logic). Most techniques have to be adapted to work in our very low complexity setting (new counting reductions, use of the right set of built-in relations, etc.). Our paper sits right in the intersection of finite model theory and computational complexity theory.

\section{Definitions and Preliminaries}
%Def of nondet TM?
%Def of FO
%Def of classes Sigma_k, Pi_k
%explain skolemization?
%FO-int.

In this paper we consider finite $\sigma$-structures where $\sigma$ is a finite vocabulary consisting of relation and constant symbols. For a structure $\calA$, $\textrm{dom}(\calA)$ denotes its universe. We will always use structures with universe $\{0, 1, \dots, n-1\}$ for some $n \in \mathbb{N}\setminus\{0\}$. Sometimes we will assume that our structures contain certain \emph{built-in relations and constants}, e.g., $\leq^2$, $\textrm{SUCC}^2$, $\textrm{BIT}^2$ and $\min$. In the following, we will always make it clear what built-in relations we allow. The interpretations of built-in symbols are fixed for any size of the universe as follows: $\leq^2$ is the $\leq$-relation on $\mathbb{N}$, $min$ is 0, $\textrm{SUCC}(i,j)$ is true, iff $i+1=j$, and $\textrm{BIT}(i,j)$ is true, iff the $i$'th bit of the binary represention of $j$ is 1. We will generally write $\enc_\sigma(\mathcal{A})$ for the binary encoding of a $\sigma$-structure $\mathcal{A}$. For this we assume the standard encoding (see \eg \cite{ImmermanBuch}): Relations are encoded row by row by listing their truth values as $0$'s and $1$'s. Constants are encoded by the binary representation of their value and thus a string of length $\lceil \log_2(n)\rceil$. A whole structure is encoded by the concatenation of the encodings of its relations and constants except for the built-in numerical predicates and constants: These are not encoded, because they are fixed for any input length. 

Since we want to talk about languages accepted by Boolean circuits, we will need the vocabulary
\[\tString = (\leq^2, S^1)\]
of binary strings. A binary string is represented as a structure over this vocabulary as follows: Let $w \in \{0,1\}^*$ with $|w| = n$. Then the structure representing this string is the structure with universe $\{0,\dots,n-1\}$, $\leq^2$ interpreted as the $\leq$-relation on the natural numbers and $x \in S$, iff the $x$'th bit of $w$ is 1. The structure corresponding to string $w$ is denoted by $\mathcal{A}_w$.

For any $k$, the fragments $\Sigma_k$ and $\Pi_k$ of \FO are the classes of all formulae in prenex normal form with a quantifier prefix with $k$ alternations starting with an existential or an universal quantifier, respectively. %A formula is said to be in prenex normal form, if all quantifiers occur in the beginning with a quantifier-free part following after them.\\

A concept connected to \FO that we will need to define uniformity of circuit families are \FO-interpretations, which are mappings between structures over different vocabularies.

\begin{definition}
Let $\sigma, \tau$ be vocabularies,
$\tau = \langle R_1^{a_1}, \dots, R_r^{a_r}\rangle$. A first-order interpretation (or FO-interpretation)
\[I: \struc[\sigma] \rightarrow \struc[\tau]\]
is given by a tuple of FO-formulae $\varphi_0, \varphi_1, \dots, \varphi_r$ over the vocabulary $\sigma$. For some $k$, $\varphi_0$ has $k$ free variables and $\varphi_i$ has $k \cdot a_i$ free variables \fa $i \geq 1$. For each structure $\mathcal{A} \in \struc[\sigma]$, these formulae define the structure
\[I(\mathcal{A}) = \langle |I(\mathcal{A})|, R_1^{I(\mathcal{A})}, \dots, R_r^{I(\mathcal{A})}\rangle \in \struc[\tau],\]
where the universe is defined by $\varphi_0$ and the relations by $\varphi_1, \dots, \varphi_r$ in the following way:
\[|I(\mathcal{A})| = \{\langle b^1, \dots, b^k\rangle \mid \mathcal{A} \vDash \varphi_0(b^1, \dots, b^k)\}\]
\[R_i^{I(\mathcal{A})} = \{(\langle b_1^1, \dots, b_1^k\rangle, \dots, \langle b_{a_i}^1, \dots, b_{a_i}^k\rangle) \in |I(\mathcal{A})|^{a_i} \mid \mathcal{A} \vDash \varphi_i(b_1^1, \dots, b_{a_i}^k)\}\]
\end{definition}

%Def of #P
%Def of #FO^{rel}, #\Sigma_k^{rel}, #\Pi_k^{rel}
%#P = #FO^{rel} = #\Pi_2^{rel} > #\Sigma_2^{rel} > #\Pi_1^{rel} > #\Sigma_1^{rel} > #\Sigma_0^{rel}
%(by saluja et al)

We will now define the class \cP and a model-theoretic framework in which the class can be characterized. Here, we follow \cite{DescCompNumP} only changing the name slightly to emphasize that we are counting relations in this setting.

\begin{definition}
A function $f: \{0,1\}^* \rightarrow \mathbb{N}$ is in \cP, if there is a non-deterministic Turing-machine $M$ \stfa inputs $x \in \{0,1\}^*$,
\[f(x) = \textrm{number of accepting computation paths of } M \textrm{ on input } x.\]
%We denote by \cP the class of all functions $f: \{0,1\}^* \rightarrow \mathbb{N}$ for which there is a non-deterministic Turing-machine $M$ \stfa inputs $x \in \{0,1\}^*$,
%\[f(x) = \textrm{number of accepting computation paths of } M \textrm{ on input } x.\]
\end{definition}

\begin{definition}\label{def-forel}
A function $f: \{0,1\}^* \rightarrow \mathbb{N}$ is in $\cFO^\rel$, if there is a vocabulary $\sigma$ including built-in linear order $\leq$, and an \FO-formula $\varphi(R_1, \dots, R_k, x_1, \dots, x_\ell)$ over $\sigma$ with free relation variables $R_1, \dots, R_k$ and free individual variables $x_1, \dots, x_\ell$ \stfa $\calA \in \struc[\sigma]$,
\[f(\enc_\sigma(\calA)) = |\{(S_1, \dots, S_k, c_1, \dots, c_\ell) \mid \calA \vDash \varphi(S_1, \dots, S_k, c_1, \dots, c_\ell\}|.\]
%We denote by $\cFO^\rel$ the class of all functions $\{0,1\}^* \rightarrow \mathbb{N}$ for which there is a vocabulary $\sigma$ including built-in linear order $\leq$, an \FO-formula $\varphi(R_1, \dots, R_k, x_1, \dots, x_\ell)$ over $\sigma$ with free relation variables $R_1, \dots, R_k$ and free individual variables $x_1, \dots, x_\ell$ \stfa $\calA \in \struc[\sigma]$,
%\[f(\enc_\sigma(\calA)) = |\{(S_1, \dots, S_k, c_1, \dots, c_\ell) \mid \calA \vDash \varphi(S_1, \dots, S_k, c_1, \dots, c_\ell\}|.\]
\end{definition}

In the same fashion we define counting classes using fragments of \FO, such as $\cSigmai{k}^\rel$ and $\cPii{k}^\rel$ for arbitrary $k$. In \cite{DescCompNumP} the following was shown for these classes (assuming order as the only built-in relation):

\begin{theorem}\label{thm-saluja-hierarchy}
$\cSigmaiRel{0} = \cPiiRel{0} \subset \cSigmaiRel{1} \subset \cPiiRel{1} \subset \cSigmaiRel{2} \subset \cPiiRel{2} = \cFO^\rel = \cP$.
\end{theorem}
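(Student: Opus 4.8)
The plan is to split the statement into three parts: the two boundary equalities, the chain of inclusions, and the strictness of the four proper ones. The equality $\cSigmaiRel{0}=\cPiiRel{0}$ is immediate, since at level $0$ both fragments consist exactly of the quantifier-free formulae, so the two classes are defined by the very same formulae (and the same counting of assignments to the free relation and individual variables). The inclusions $\cPiiRel{k}\subseteq\cSigmaiRel{k+1}$ are equally cheap: prepending a dummy existential block turns a $\Pi_k$-formula into a $\Sigma_{k+1}$-formula without changing the set of satisfying assignments. The only non-syntactic inclusions are $\cSigmaiRel{k}\subseteq\cPiiRel{k}$, and here I would use the built-in order to select a canonical witness: given $\exists\bar{x}_1\,\chi$ with $\chi\in\Pi_{k-1}$, I replace it by the formula asserting that $\bar{x}_1$ is the lexicographically least tuple with $\chi(\bar{x}_1)$, i.e.\ $\chi(\bar{x}_1)\wedge\forall \bar{x}_1'\,(\bar{x}_1'<\bar{x}_1\to\neg\chi(\bar{x}_1'))$, now promoting $\bar{x}_1$ to a counted free individual tuple. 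Each assignment satisfying the original formula has exactly one least witness, so the count is preserved, and a quantifier count shows the new formula lies in $\Pi_k$. This yields all the non-strict inclusions.

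For the top equality I would prove the cycle $\cFO^\rel\subseteq\cP\subseteq\cPiiRel{2}\subseteq\cFO^\rel$. The last inclusion is trivial ($\Pi_2\subseteq\FO$), and $\cFO^\rel\subseteq\cP$ is a direct simulation: on input $\enc_\sigma(\calA)$ a nondeterministic machine guesses an assignment $(\bar{S},\bar{c})$ to the free relation and individual variables and then checks $\calA\vDash\varphi(\bar{S},\bar{c})$ deterministically in polynomial time, so accepting paths biject with satisfying assignments. The substantial step is $\cP\subseteq\cPiiRel{2}$, a counting refinement of Fagin's theorem: for an NTM $M$ of running time $n^c$, I encode a complete accepting computation---its computation tableau together with the sequence of nondeterministic choices---as an assignment to free relation variables, set up so that each accepting path corresponds to exactly one assignment. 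The verifying formula says that the tableau is initialized by the input and is locally consistent (a universal statement over cell indices, with the arithmetic of indexing handled by the built-in order and $\mathrm{BIT}$) and that it reaches an accepting cell (an existential statement); prenexing the conjunction of a $\Pi_1$- and a $\Sigma_1$-part gives a $\Pi_2$-formula. Hence the number of satisfying relation-assignments equals the number of accepting paths, closing the cycle.

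The heart of the theorem, and where I expect the real work, is the strictness of the four proper inclusions $\cSigmaiRel{0}\subsetneq\cSigmaiRel{1}$, $\cSigmaiRel{1}\subsetneq\cPiiRel{1}$, $\cPiiRel{1}\subsetneq\cSigmaiRel{2}$ and $\cSigmaiRel{2}\subsetneq\cPiiRel{2}$. For each I would exhibit a concrete counting function in the larger class and prove it is not definable in the smaller one. The lower bounds should rest on structural normal forms forced by each prefix class: the value of a $\cSigmaiRel{0}$-function depends only on a bounded amount of local data and so must be a rigidly shaped closed-form expression in the universe size, while counting tuples certified by an existential witness ($\cSigmaiRel{1}$) already escapes that shape; passing from an existential to a universal leading block ($\cSigmaiRel{1}$ versus $\cPiiRel{1}$, and $\cSigmaiRel{2}$ versus $\cPiiRel{2}$) lets one count relations subject to a genuinely global constraint, of \#SAT type, which no purely existential leading block can realize. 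The obstacle is precisely these inexpressibility results: unlike the inclusions, ruling out a low-level definition requires a real lower-bound argument---a counting analogue of an Ehrenfeucht--Fra\"iss\'e game, or an analysis of the closure and growth behaviour imposed by the prefix---and pinning down the right separating function together with an invariant it provably violates is the technically demanding part. I would package this as one lemma per level giving the invariant, followed by one witness function per inclusion.
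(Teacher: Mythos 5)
First, note that the paper does not prove this theorem at all: it is imported verbatim from Saluja et al.\ \cite{DescCompNumP} (``In \cite{DescCompNumP} the following was shown\dots''), so your attempt must be measured against that original proof and against the techniques this paper reuses from it. The routine parts of your proposal are correct and match the standard route: $\cSigmaiRel{0}=\cPiiRel{0}$ is trivial, dummy quantifiers give $\cPiiRel{k}\subseteq\cSigmaiRel{k+1}$, and your lexicographically-least-witness trick for $\cSigmaiRel{k}\subseteq\cPiiRel{k}$ is exactly the device this paper itself uses in proving $\cFO=\cPii{1}$; the guess-and-check direction and the tableau-based counting version of Fagin's theorem for $\cP\subseteq\cPiiRel{2}$ are also the standard arguments. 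One caveat on that last step: the theorem is stated ``assuming order as the only built-in relation,'' so your appeal to built-in $\mathrm{BIT}$ for the index arithmetic is not available. It is fixable --- lexicographic successor on index tuples is first-order definable from $\leq$ alone, which is all the tableau encoding needs --- but it is not an innocent slip: the paper's concluding remarks point out that adding $\mathrm{BIT}$, $\mathrm{SUCC}$, $\min$ and $\max$ collapses the hierarchy to $\cPiiRel{1}=\cP$, so the entire strictness content of the theorem lives in the order-only setting, and an argument that smuggles in $\mathrm{BIT}$ is an argument about a different (collapsed) hierarchy.

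The genuine gap is that the four strict inclusions --- the substance of the theorem --- are never proved: you explicitly defer them to ``one lemma per level giving the invariant'' without exhibiting either the separating functions or the invariants. These require concrete arguments of the kind visible elsewhere in this paper: closure of $\Sigma_1$-formulae under extensions of models (the technique of Lemma~\ref{l1} and of the final lemma of Sect.~\ref{sect-connection} on $\#3\mathrm{DNF}$), closure under substructures of the universal part of a $\Sigma_2$-formula after fixing its existential witnesses (the Sect.~\ref{sect-connection} argument, adapted from Saluja et al.'s proof that $\#$HAMILTONIAN is not in $\cSigmaiRel{2}$), and a shape/growth analysis of $\cSigmaiRel{0}$-functions. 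Moreover, your sketch of invariants covers $\cSigmaiRel{0}$ and the two $\Sigma$-versus-$\Pi$ passages, but says nothing usable about the $\Pi_1$ lower bound needed for $\cPiiRel{1}\subsetneq\cSigmaiRel{2}$: neither ``rigid closed form'' nor ``no purely existential leading block'' rules out a definition whose leading block is universal. A counting Ehrenfeucht--Fra\"{\i}ss\'e game, which you float as an alternative, is also not what the original proof does. As it stands, your proposal establishes the two equalities and the non-strict chain, while the heart of the theorem remains open.
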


Besides this theorem, it was also shown that the functions in $\cSigmaiRel{0}$ can be computed in polynomial time.

To illustrate the just given definition, we repeat an example from Saluja et al.{} \cite{DescCompNumP} that will also be important for us later.	

\begin{example}\label{example-3DNF}
We will show that $\#3\mathrm{DNF}$, the problem of counting the number of satisfying assignments of a propositional formula in disjunctive normal-form with at most 3 literals per disjunct, is in the class $\cSigmaiRel{1}$. To do so, we use the vocabulary $\sigma_{\#3\mathrm{DNF}}=(D_0,D_1,D_2,D_3)$. Given a 3DNF-formula $\varphi$ over variables $V$, we construct a corresponding $\sigma$-structure ${\cal A}_\varphi$ with universe $V$ such that for any $x_1,x_2,x_3\in V$, $D_i(x_1,x_2,x_3)$ holds iff $\neg x_1\wedge\dots\wedge\neg x_i\wedge x_{i+1}\wedge\dots\wedge x_3$ appears as a disjunct.
Now consider the following $\sigma$-formula with free relational variable $T$:
$$\begin{array}{r@{\ }l}
\Phi_\cDnf(T) = \exists x\exists y\exists z \quad \smash{\Bigl(}
		& \phantom{\vee}\bigl(D_0(x,y,z) \wedge T(x) \wedge T(y) \wedge T(z)\bigr)\\
		& \vee \bigl(D_1(x,y,z) \wedge \neg T(x) \wedge T(y) \wedge T(z)\bigr)\\
		& \vee \bigl(D_2(x,y,z) \wedge \neg T(x) \wedge \neg T(y) \wedge T(z)\bigr)\\
		& \vee \bigl(D_3(x,y,z) \wedge \neg T(x) \wedge \neg T(y) \wedge \neg T(z)\bigr) \smash{\Bigr)}\\
\end{array}
$$
Observe that $\Phi_\cDnf$ is a $\Sigma_1$-formula. 
Evaluated on an input structure $\calA_\varphi$, it expresses that an assignment to $T$ defines a satisfying assignment of $\varphi$. Hence, the number of assignments $\mathbf T$ such that ${\cal A}_\varphi \models \Phi_\cDnf({\mathbf T})$ is equal to the number of satisfying assignments of $\varphi$. 
\end{example}

We will next recall the definition of Boolean circuits and counting classes defined using them. A circuit is a directed acyclic graph (dag), whose nodes (also called gates) are marked with either a Boolean function (in our case $\land$ or $\lor$), a constant (0 or 1), or a (possibly negated) bit of the input. Also, one gate is marked as the output gate. On any input, a circuit computes a Boolean function by evaluating all gates according to what they are marked with. The value of the output gate then is the function value for that input.\\
When we want circuits to work on different input lengths, we have to consider families of circuits: A family contains a circuit for any input length $n \in \mathbb{N}$. Families of circuits allow us to talk about languages beeing accepted by circuits: A circuit family $\mathcal{C} = (C_n)_{n \in \mathbb{N}}$ is said to accept (or decide) the language $L$, if it computes its characteristic function $c_L$:
\[C_{|x|}(x) = c_L(x) \textrm{ for all } x.\]
The complexity classes in circuit complexity are classes of languages that can be decided by circuit families with certain restrictions to their depth and  size. The depth here is the length of a longest path from any input gate to the output gate of a circuit and the size is the number of non-input gates in a circuit. Depth and size of a circuit family are defined as functions accordingly.\\
Above, we have not restricted the computability of the circuit $C_{|x|}$ from $x$ in any way. This is called non-uniformity, which allows such circuit families to even compute non-recursive functions. Since we want to stay within \cP, we need some notion of uniformity. For this, we first define the vocabulary for Boolean circuits as \FO-structures:
\[\tCirc = (E^2, G_\land^1, G_\lor^1, B^1, r^1),\]
where the relations are interpreted as follows:
\begin{itemize}
  \item $E(x,y)$:  $y$ is a child of $x$
  \item $G_\land(x)$: gate $x$ is an and-gate
  \item $G_\lor(x)$: gate $x$ is an or-gate
  \item $B(x)$: gate $x$ is a true leaf of the circuit
  \item $r(x)$: $x$ is the root of the circuit
%  \item We will also use the abbreviation $L(x) \triangleq \neg G_\land(x) \land \neg G_\lor(x)$
\end{itemize}

We will now define \FO-uniformity of Boolean circuits in general and the class \FO-uniform \AC.

\begin{definition}
A circuit family $\mathcal{C} = (C_n)_{n \in \mathbb{N}}$ is said to be first-order uniform (FO-uniform) if there is an FO-interpretation
\[I: \struc[\tString\cup(\textrm{BIT}^2)] \rightarrow \struc[\tCirc]\]
mapping any structure $\mathcal{A}_w$ over \tString to the circuit $C_{|w|}$ given as a structure over the vocabulary \tCirc. 
\end{definition}

\begin{definition}
A language $L \subseteq \{0,1\}^*$ is in \FO-uniform \AC, if there is an \FO-uniform circuit family with constant depth and polynomial size accepting $L$.
\end{definition}

It is known that the just given class coincides with the class $\FO$ of all languages definable in first-order logic \cite{baimst90,ImmermanBuch}, i.e., informally: $\FO\text{-uniform }\AC = \FO$. For this identity, it is central that our logical language includes the built-in relations of linear order and BIT. 

We will next define counting classes corresponding to Boolean circuit families. For a nondeterministic Turing machine, the witnesses we want to count are the accepting paths of the machine on a given input. Considering polynomial time computations, this concept gives rise to the class \cP. A witness that a Boolean circuit accepts its input is a so called \emph{proof tree}, a minimal subtree of the circuit showing that it evaluates to true for a given input. For this, we first unfold the given circuit into tree shape, and we further require that it is in \emph{negation normal form} (meaning that negations only occur directly in front of literals). A proof tree then is a subtree we get by choosing for any $\lor$-gate exactly one child and for any $\land$-gate all children, \ST every leaf which we reach in this way is a true literal. This allows us to define the class \cAC as follows:

\begin{definition}
A function $f: \{0,1\}^* \rightarrow \mathbb{N}$ is in \FO-uniform \cAC, if there is an \FO-uniform circuit family $\mathcal{C} = (C_n)_{n \in \mathbb{N}}$ such that for all $w \in \{0,1\}^*$,
\[f(w) = \textrm{number of proof trees of } C_{|w|}(w).\]
\end{definition}

%Def of #WinFO
%#AC0 = #WinFO by us

As was shown in \cite{HaVo16}, there is a model-theoretic characterization of \FO-uniform \cAC. For this, let us define the Skolemization of an \FO-formula $\varphi$ in prenex normal form by removing all existential quantifiers and replacing each existentially quantified variable in the matrix of $\varphi$ by a term consisting of a function application to those variables quantified universally to the left of the original existential quantifier.
In other words, every existential variable is replaced by its so called \emph{Skolem function}. Now, \cAC contains exactly those functions that can be given as the number of Skolem functions for a given \FO-formula.

\begin{definition}\label{def-pisk}
A function $f\colon \{0,1\}^* \rightarrow \mathbb{N}$ is in the class $\cPiSk$ if there is a vocabulary $\sigma$ including built-in linear order, BIT and min and a first-order sentence $\varphi$ over $\sigma$ in prenex normal form  
\[\varphi \triangleq \exists y_1 \forall z_1 \exists y_2 \forall z_2 \dots \exists y_{k-1} \forall z_{k-1} \exists y_k \,\, \psi(\overline{y}, \overline{z})\]
\stfa  $\calA \in \struc[\sigma]$,
$f(\enc_\sigma(\calA))$ is equal to the number of tuples $(f_1,\dots,f_k)$ of functions such that 
$$\mathcal{A} \ \vDash \ \forall z_1 \dots \forall z_{k-1} \ \psi(f_1, f_2(z_1), \dots, f_k(z_1, \dots, z_{k-1}), z_1, \dots, z_{k-1})$$
\end{definition}

This means that $\cPiSk$ contains those functions that, for a fixed \FO-formula over some vocabulary $\sigma$, map an input $w$ to the number of Skolem functions on $\mathcal{A} = \enc_\sigma^{-1}(w)$.

%\begin{definition}
%A function $f$ is in $\cPiSk$, if there is a vocabulary $\sigma$ and a first-order sentence $\varphi$ in prenex normal form over $\sigma$ \stfa $\calA \in \struc[\sigma]$,
%\[f(\enc_\sigma(\calA)) = \textrm{number of Skolem-functions for } \calA \vDash \varphi.\]
%\end{definition}

\begin{theorem}\label{thmNumACeqWinFO}
$\FO\textrm{-uniform } \cAC = \cPiSk$
\end{theorem}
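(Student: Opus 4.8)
The plan is to prove both inclusions by exhibiting, in each direction, an explicit bijection between the proof trees of a circuit and the satisfying tuples of Skolem functions of a prenex formula, organized around the dictionary ``$\exists$-block $\leftrightarrow$ choose one child at an $\lor$-gate'' and ``$\forall$-block $\leftrightarrow$ take all children at an $\land$-gate''. Throughout, gates are encoded as fixed-length tuples, so the quantifier blocks $\bar y_i,\bar z_i$ are blocks of variables and the Skolem functions are correspondingly tuple-valued (equivalently, several single-valued functions per block), which fits the counting of tuples $(f_1,\dots,f_k)$ in Definition~\ref{def-pisk}; this tuple bookkeeping is routine given built-in order, BIT and min.

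For $\cPiSk \subseteq \FO$-uniform $\cAC$, I would start from a sentence $\varphi = \exists \bar y_1 \forall \bar z_1 \cdots \exists \bar y_k\,\psi$ and build the circuit obtained by unfolding it: an $\lor$-gate fanning out over all values of $\bar y_1$, below each an $\land$-gate over all values of $\bar z_1$, and so on, with the quantifier-free matrix $\psi$ compiled into a constant-depth subcircuit at the bottom. By the standard translation underlying \FO-uniform \AC $=$ \FO this family is \FO-uniform (order and BIT address the fan-in, which equals the domain size) and has depth $2k-1$, hence constant. Along a proof tree, the unique child selected at the $i$-th $\lor$-layer, read as a function of the universally branched values $\bar z_1,\dots,\bar z_{i-1}$ above it, is exactly a Skolem function $f_i$, and the $\land$-layers force this choice to be valid for \emph{all} values of the intervening $\bar z_j$. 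The one delicate point is that $\psi$ must be compiled \emph{unambiguously}, i.e.\ into a subcircuit with exactly one proof tree when it evaluates to true and none otherwise, so that the matrix does not inflate the count; since $\varphi$ is fixed, $\psi$ ranges over a bounded set of atoms and can be put into a constant-size, decision-tree-like normal form (a disjunction of mutually exclusive conjunctions of literals) enjoying this property. With this, proof trees biject with satisfying tuples $(f_1,\dots,f_k)$.

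For the converse $\FO$-uniform $\cAC \subseteq \cPiSk$ --- which I expect to be the harder direction --- I would first bring the given \FO-uniform, constant-depth, polynomial-size family into a layered negation normal form with strictly alternating $\lor$- and $\land$-layers and an $\lor$-gate as root, inserting unary dummy gates where necessary; since a unary $\lor$- or $\land$-gate leaves the number of proof trees unchanged and the insertion is definable, this transformation is \FO-uniform and proof-tree preserving. Using the \FO-interpretation provided by uniformity, the circuit relations $E, G_\land, G_\lor, B, r$ become \FO-formulae over $\calA_w$ (with order, BIT and min built in). I would then write $\varphi = \exists \bar y_1 \forall \bar z_1 \cdots \exists \bar y_k\,\psi$ whose blocks range over gate-tuples and trace a root-to-leaf path $r, \bar y_1, \bar z_1, \bar y_2, \dots$, the matrix $\psi$ asserting the edge relations $E(\cdot,\cdot)$ between successive gates along the path and that the final leaf lies in $B$.

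The decisive technical point is to make the count \emph{exact}. Because each $\bar z_i$ is universally quantified over the whole domain, including tuples that are not genuine children of the current $\land$-gate, I would have $\psi$ be vacuously true on such invalid branches but additionally \emph{force} all later existential choices to the canonical default (the tuple of $\min$) on any branch that has already become invalid; symmetrically, on a valid branch $\psi$ requires each existential choice to be a genuine child and the reached leaf to be true, and is false otherwise. This default-forcing guarantees that a satisfying Skolem tuple is completely determined by its behaviour on valid paths, so that distinct satisfying tuples correspond to distinct proof trees and, conversely, every proof tree extends uniquely to a satisfying tuple; hence the two counts coincide. Since the depth is constant, the number of alternations $k$ is constant and $\psi$ is a fixed quantifier-free formula, so $\varphi$ is a legitimate witness for $\cPiSk$. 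The main obstacles are therefore the proof-tree-preserving layering and, above all, this exact-counting bookkeeping --- unambiguous compilation of the matrix in the first direction and the default-forcing of invalid branches in the second.
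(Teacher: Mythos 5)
The paper does not prove this theorem itself but imports it from \cite{HaVo16}; your proposal correctly reconstructs the argument of that reference, based on the same dictionary between proof trees and tuples of Skolem functions (the choice at the $i$-th $\lor$-layer is recoverable as a function of the universal history alone, since earlier existential choices are themselves determined by the earlier Skolem functions), including the two genuine subtleties: compiling the quantifier-free matrix into a constant-size mutually exclusive disjunction so that it contributes exactly one proof tree when true, and pinning all later existential values to the canonical $\min$-default on branches that leave the circuit, so that satisfying Skolem tuples and proof trees biject rather than the count being inflated by unconstrained function values on spurious arguments. One point you wave off as routine merits a remark: Definition~\ref{def-pisk} strictly alternates \emph{single} variables, so splitting a block $\exists \bar y_i$ requires interleaving dummy universal variables whose Skolem-function arguments would again be unconstrained---but your own default-forcing trick (declaring all values of the dummy variable except $\min$ invalid and forcing $\min$-values thereafter) is exactly what makes this bookkeeping sound, so the gloss is justified once that is said explicitly.
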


The above mentioned result $\FO=\AC$ \cite{baimst90,ImmermanBuch} requires built-in order and BIT; hence it is no surprise that also for the just given theorem these relations are needed, and this is the reason why they also appear in Def.~\ref{def-pisk}.

%\subsection{Known Strictness Results}
%\input{../2_def_prelim/strictness_known}

\section{Connecting the Characterizations of \cAC and \cP}
\label{sect-framework}

We will now establish a unified view on the model-theoretic characterizations of both \cAC and \cP. This will be done by viewing \cAC as a syntactic subclass of \cFO. 
In Theorem \ref{thmNumACeqWinFO} we characterized \cAC by a process of counting assignments to function variables in FO-formulae, but only in a very restricted setting. 
It is natural to define the process of counting functions in a more general way, similar to the framework of \cite{DescCompNumP}, repeated here in Def.~\ref{def-forel}, where Saluja et al.{} count assignments to free relation variables in FO-formulae to obtain their characterization of \cP.

\begin{definition}
$\cFO$ is the class of all functions $f\colon \{0,1\}^* \rightarrow \mathbb{N}$ for which there is a vocabulary $\sigma$, including built-in $\leq, \textrm{BIT}$ and $\min$, and an \FO-formula $\varphi(F_1, \dots, F_k, x_1, \dots, x_\ell)$ over $\sigma$ with free function variables $F_1, \dots, F_k$ and free individual variables $x_1, \dots, x_\ell$ \stfa $\calA \in \struc[\sigma]$,
\[f(\enc_\sigma(\calA)) = \bigl|\bigl\{(f_1, \dots, f_k, c_1, \dots, c_\ell) \;\big|\; \calA \vDash \varphi(f_1, \dots, f_k, c_1, \dots, c_\ell\bigr\}\bigr|.\]
\end{definition}

In the same fashion we define counting classes using fragments of \FO, such as $\cSigmai{k}$ and $\cPii{k}$ for arbitrary $k$. Note, that the free individual variables could also be seen as free function variables of arity 0. 

We stress that our signatures in the above definition include symbols $\leq$, $\textrm{BIT}$, and $\min$ with their standard interpretations; as argued already several times, these built-ins are necessary in order to obtain a close correspondence between Boolean circuits and first-order logic. In contrast to our definition, Saluja et al.{} (see Def.~\ref{def-forel}) only use built-in order. Still, we will now see that both concepts, counting relations and counting function, are in fact equivalent as long as we use all of \FO, even with different sets of built-in relations.

\begin{theorem}
%Counting assignments to free function variables,
$\cFO^\rel = \cFO = \cP$.
\end{theorem}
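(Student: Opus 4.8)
The plan is to use the already-established identity $\cFO^{\rel} = \cP$ (the right end of Theorem~\ref{thm-saluja-hierarchy}) and to prove the remaining equality $\cFO = \cP$ by two inclusions, translating between counting functions and counting relations. The one genuine wrinkle is that $\cFO$ carries the built-ins $\leq, \textrm{BIT}, \min$ whereas $\cFO^{\rel}$ in Theorem~\ref{thm-saluja-hierarchy} carries only $\leq$; I will arrange both inclusions so that this discrepancy never bites, by passing through $\cP$ on whichever side the extra built-ins live.

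For $\cFO \subseteq \cP$ I would argue directly with a machine. Given $\sigma$ and a formula $\varphi(F_1,\dots,F_k,x_1,\dots,x_\ell)$ witnessing $f \in \cFO$, build a nondeterministic polynomial-time machine that, on input $\enc_\sigma(\calA)$, reconstructs $\calA$, then nondeterministically writes down a table for each $f_i$ (of size $n^{a_i}\lceil\log n\rceil$, hence polynomial) together with values $c_1,\dots,c_\ell$, and accepts iff $\calA \vDash \varphi(f_1,\dots,f_k,c_1,\dots,c_\ell)$. Evaluating a fixed first-order formula on an explicitly given structure is polynomial time, and this stays true in the presence of $\leq$, $\textrm{BIT}$ and $\min$ since all three are polynomial-time computable from the universe size; so the built-ins cause no trouble here. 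Each accepting path corresponds to exactly one satisfying tuple $(f_1,\dots,f_k,c_1,\dots,c_\ell)$ and conversely, so the number of accepting paths equals $f(\enc_\sigma(\calA))$ and $f \in \cP$.

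For the reverse inclusion I would start from a relational formula witnessing membership in $\cP = \cFO^{\rel}$ and rewrite it as a functional one. The idea is to represent each relation by its own $\{0,1\}$-valued indicator: for a free relation variable $R_i$ of arity $a_i$ I introduce a free function variable $F_i$ of the same arity, replace every atom $R_i(\bar t)$ by $F_i(\bar t) = 1$, and conjoin the constraint $\bigwedge_i \forall \bar x\,\bigl(F_i(\bar x)=0 \vee F_i(\bar x)=1\bigr)$ forcing $F_i$ to be Boolean-valued. Here $0$ is $\min$ and $1$ is its successor, both first-order definable from $\leq$ and $\min$, which $\cFO$ provides; the order $\leq$ used by the original $\cFO^{\rel}$-formula is likewise available. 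Functional, $\{0,1\}$-valued assignments to the $F_i$ are in bijection with relational assignments to the $R_i$ (the free individual variables are carried over unchanged), and under this bijection the rewritten formula has the same truth value as the original, so the two counts coincide and the function lies in $\cFO$.

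I expect the main obstacle to be bookkeeping rather than depth: making the two correspondences into exact bijections so that no satisfying object is double-counted or dropped, and confirming that restricting to Boolean-valued assignments matches relations precisely. If one prefers a purely logical argument for $\cFO \subseteq \cFO^{\rel}$ in place of the machine route, the analogous delicate step is the standard flattening of nested function terms $F_i(\bar t)$ into existentially quantified graph atoms $R_i(\bar t,u)$ together with a functionality constraint $\forall \bar x\,\exists! u\, R_i(\bar x,u)$; this is routine but must be carried out carefully to preserve both satisfaction and the count, after which one absorbs the built-in mismatch via $\cFO^{\rel}\textrm{-with-BIT} \subseteq \cP = \cFO^{\rel}$.
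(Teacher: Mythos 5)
Your proof is correct and follows essentially the same route as the paper: you cite Saluja et al.\ for $\cP \subseteq \cFO^{\rel}$, observe that $\cFO \subseteq \cP$ by a straightforward guess-and-check counting machine (the paper simply calls this trivial), and show $\cFO^{\rel} \subseteq \cFO$ by simulating each relation with a function forced to take only the two smallest universe values, exactly as the paper does (it encodes membership as $F_i(\bar z)=\min$ rather than $F_i(\bar t)=1$, an immaterial difference). Your extra care about the built-in mismatch is sound and consistent with the paper's remark that the equivalence holds even with different sets of built-ins.
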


\begin{proof}
The inclusion $\cFO^\rel \subseteq \cFO$ is shown as follows:
Let $f \in \cFO^\rel$ via the formula $\varphi$ containing  free relation variables $R_1, \dots, R_k$. We can replace $R_i$ by a function variable $F_i$ of the same arity for all $i$. We then add a conjunct to the formula ensuring that for these functions only $\min$ and the element $x > \min$ with $\forall y (y < x \rightarrow y = \min)$ are allowed as function values. Then each occurrence of $R_i(\overline{z})$ can be replaced by $F_i(\overline{z}) = \min$.

The inclusion $\cFO \subseteq \cP$ is trivial. The inclusion $\cP\subseteq\cFO^\rel$ was shown in \cite{DescCompNumP}.
\end{proof}

%\subsection{Viewing \cAC as a Syntactic Subclass of \cFO}

%\begin{definition}
%A function $g: \{0,1\}^* \rightarrow \mathbb{N}$ is in the class $\#\Pi_1^\textrm{sk}$, if there is a first-order formula
%\[\varphi(\overline{x}) = \exists y_1 \forall z_1 \exists y_2 \forall z_2 \dots \exists y_{k-1} \forall z_{k-1} \exists y_k \quad \psi(\overline{x}, \overline{y}, \overline{z})\]
%such that for all $w \in \{0,1\}^*$:
%\[g(w) = |\{(\overline{c}, \overline{f}) \mid \quad \mathcal{A}_w \ \vDash \ \forall z_1 \dots \forall z_{k-1} \ \psi(
%\overline{c}, f_1, f_2(z_1), \dots, f_k(z_1, \dots, z_{k-1}), z_1, \dots, z_{k-1})\}|\]
%\end{definition}
%This means that $\#\Pi_1^\textrm{sk}$ consists of  functions that map an input $w$ to the number of Skolem functions of a fixed \FO-formula evaluated on $\mathcal{A}_w$.

Note that $\cAC=\cPiSk$ 
%and $\cPiPrefix$ do 
does not directly arise from this definition by choosing an appropriate fragment of \FO because of the restricted usage of the second-order variables in Def.~\ref{def-pisk}.
Still we will characterize \cAC as a syntactic subclass of \cFO as follows.

\begin{definition}
Let \cPiPrefix be the class of all functions $g$ that can be characterized as
\[g(w) = \bigl|\bigl\{(\tu{f}, \tu{c}) \;\big|\; \varphi(\tu{f}, \tu{c}))\bigr\}\bigr|,\]
where $\varphi(\tu{F}, \tu{x}) = \forall y_1 \dots \forall y_k \psi(\tu{F}, \tu{x}, y_1, \dots, y_k)$ is a $\Pi_1$ formula and all arity-$a$ functions (for any $a$) occur in $\psi$ only in the form $F(y_1, \dots, y_a)$.
\end{definition}

\begin{lemma}
$\FO\textrm{-uniform } \cAC = \cPiPrefix$
\end{lemma}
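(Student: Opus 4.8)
The plan is to prove $\FO\text{-uniform }\cAC = \cPiPrefix$ by exhibiting the already-established equality $\FO\text{-uniform }\cAC = \cPiSk$ from Theorem~\ref{thmNumACeqWinFO} and then showing $\cPiSk = \cPiPrefix$. The whole content therefore reduces to a syntactic translation between counting Skolem functions for a prenex \FO-sentence and counting assignments to free function variables occurring only in the restricted form $F(y_1,\dots,y_a)$ in a $\Pi_1$-formula. First I would observe that the two classes are defined by essentially the same counting process applied to the same kind of object; the point is to match the combinatorial data precisely.

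For the inclusion $\cPiSk \subseteq \cPiPrefix$, take a sentence $\varphi \triangleq \exists y_1 \forall z_1 \cdots \exists y_{k-1}\forall z_{k-1}\exists y_k\,\psi(\tu{y},\tu{z})$ as in Def.~\ref{def-pisk}. Skolemizing replaces each $y_i$ by a function term $F_i(z_1,\dots,z_{i-1})$, where $F_i$ has arity $i-1$; counting the tuples $(f_1,\dots,f_k)$ that satisfy the universally quantified Skolemized matrix is exactly $\cPiSk$. I would then write the matching $\Pi_1$-formula $\forall z_1\dots\forall z_{k-1}\,\psi\bigl(F_1, F_2(z_1),\dots,F_k(z_1,\dots,z_{k-1}),\tu{z}\bigr)$, in which each free function variable $F_i$ occurs only applied to the universally quantified variables in the prescribed argument pattern $F_i(z_1,\dots,z_{i-1})$. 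This is precisely the syntactic shape demanded by the definition of \cPiPrefix, and the number of satisfying function tuples coincides, so the value of the counting function is preserved. The built-in relations $\leq$, $\textrm{BIT}$ and $\min$ required by \cPiPrefix are already available since Def.~\ref{def-pisk} includes them.

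For the converse $\cPiPrefix \subseteq \cPiSk$, start from $\varphi(\tu{F},\tu{x}) = \forall y_1\dots\forall y_k\,\psi(\tu{F},\tu{x},\tu{y})$ where each arity-$a$ function variable occurs only as $F(y_1,\dots,y_a)$. I would first absorb the free individual variables $\tu{x}$ into the count by treating them as arity-$0$ function variables (the definitions explicitly allow this reading). The main task is then to realize each such restricted free function variable as the Skolem function of a freshly introduced existential quantifier placed at the correct alternation depth of a prenex sentence: a variable used only as $F(y_1,\dots,y_a)$ can be recovered as the Skolem function of an existential quantifier $\exists w_F$ sitting immediately to the right of $\forall y_1\cdots\forall y_a$ and to the left of $\forall y_{a+1}$. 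Reordering and nesting the universal quantifiers $\forall y_1,\dots,\forall y_k$ together with the interleaved existentials $\exists w_F$ produces a sentence in the prenex alternating shape of Def.~\ref{def-pisk}, whose Skolem functions are in bijection with the original function tuples, again preserving the count.

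The main obstacle I expect is the bookkeeping in this converse direction: several free function variables may share the same arity (hence the same argument tuple) while others have different arities, so one must carefully interleave the existential quantifiers $\exists w_F$ among the universals $\forall y_1,\dots,\forall y_k$ so that each Skolem function depends on exactly the universally quantified variables $y_1,\dots,y_a$ and on no others. This requires that the restricted-occurrence condition guarantees no function variable is ever applied to a deeper or permuted argument list, which is exactly what the hypothesis ``all arity-$a$ functions occur only in the form $F(y_1,\dots,y_a)$'' secures. Once the quantifier prefix is assembled consistently, the equivalence of the two counts is an immediate consequence of the definition of Skolem functions, and combining both inclusions with Theorem~\ref{thmNumACeqWinFO} yields $\FO\text{-uniform }\cAC = \cPiPrefix$.
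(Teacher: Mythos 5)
Your proposal is correct and follows essentially the same route as the paper: both reduce the claim to $\cPiSk = \cPiPrefix$ via Theorem~\ref{thmNumACeqWinFO}, prove the first inclusion by substituting Skolem function terms for the existential variables, and prove the converse by reintroducing existential quantifiers at the positions in the universal prefix dictated by the arities of the function variables. Your explicit attention to the bookkeeping when several function variables share an arity is a point the paper's proof leaves implicit, but it does not change the argument.
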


\begin{proof}
%We show this as follows:
%\[\FO\textrm{-uniform } \cAC \mathop{=}_{(1)} \cPiSk \mathop{=}_{(2)} \cPiPrefix\]
By Theorem~\ref{thmNumACeqWinFO} it suffices to show 
$ \cPiSk \mathop{=} \cPiPrefix$.
%(1) is Theorem~\ref{thmNumACeqWinFO}. 
%For (2), we start with $\cPiSk \subseteq \cPiPrefix$. 
We consider first  the inclusion $\cPiSk \subseteq \cPiPrefix$. Let $g \in \cPiSk$ via a formula $\varphi$  as in Definition \ref{def-pisk}. Then we can simply replace the occurrences of variables $y_i$ in $\psi$ by the corresponding function terms. The resulting formula is prefix-restricted as needed and directly shows $g \in \cPiPrefix$.

For $\cPiPrefix \subseteq \cPiSk$, let $g \in \cPiPrefix$ via a formula $\varphi$. Since all function symbols occurring in $\varphi$ are only applied to a unique prefix of the universally quantified variables, they can be seen as Skolem functions of suitable existentially quantified variables. Thus, we can replace the occurrences of the function symbols by new variables that are existentially quantified at adequate positions between the universally quantified variables. If for example, the input for a function was $x_1, \dots, x_\ell$, then the new  variable is quantified after the part
$\forall x_1 \dots \forall x_\ell$
of the quantifier prefix. This yields a formula $\varphi'$ that shows $g \in \cPiSk$.
\end{proof}

\section{An Alternation Hierarchy in $\cFO$}
\label{sect-hierarchy}

In this section we study a hierarchy within \cFO based on quantifier alternations. Interestingly, our approach allows us to locate \cAC in this hierarchy. First we note that the whole hierarchy collapses to a quite low class.

\begin{theorem}
$\cFO = \cPii{1}$
\end{theorem}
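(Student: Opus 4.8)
The goal is to show $\cFO = \cPii{1}$. Since $\cPii{1} \subseteq \cFO$ holds by definition (the $\Pi_1$ fragment is a syntactic restriction of full \FO), the substantive direction is $\cFO \subseteq \cPii{1}$: every function obtained by counting assignments to free function variables in an arbitrary \FO-formula can already be obtained using only a $\Pi_1$-formula (a purely universal quantifier prefix). The plan is to take an arbitrary \FO-formula $\varphi(\tu F, \tu x)$ in prenex normal form, with quantifier prefix alternating between $\exists$ and $\forall$ over first-order variables, and to eliminate the existential first-order quantifiers by Skolemization, absorbing the resulting Skolem functions into \emph{new} free function variables that are then counted.

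\medskip

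The key steps I would carry out are as follows. First, put $\varphi$ into prenex normal form and write its prefix as $Q_1 v_1 \dots Q_m v_m$ with matrix $\psi$. Each existentially quantified variable $v_i$ (with $Q_i = \exists$) depends on the universally quantified variables to its left; replace it by a fresh function symbol $g_i$ applied to exactly those universal variables, turning $\varphi$ into a $\Pi_1$-formula $\varphi'(\tu F, \tu g, \tu x) = \forall \tu z\, \psi'(\tu F, \tu g, \tu x, \tu z)$ where $\tu z$ collects the universal variables. The central point is the bijection between satisfying assignments: for a fixed structure $\calA$, the number of tuples $(\tu f, \tu c)$ satisfying the original $\varphi$ must equal the number of tuples $(\tu f, \tu g, \tu c)$ satisfying $\varphi'$. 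This is the classical fact that a prenex \FO-formula is true iff it has a Skolem-function witness, lifted here to a \emph{counting} statement: each original satisfying assignment to $(\tu F, \tu x)$ corresponds to possibly many choices of the Skolem functions $\tu g$, so the raw counts differ by the number of valid Skolem-function families.

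\medskip

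The main obstacle is precisely this count mismatch: the naive Skolemization preserves \emph{satisfiability} but not the \emph{number} of witnesses, because (i) an assignment making $\varphi$ true may admit many Skolem functions, and (ii) we must ensure every counted $\tu g$ corresponds to a genuine satisfying assignment rather than over- or under-counting. The fix I would use is to force the Skolem functions to be canonical, so that each satisfying assignment of the original formula corresponds to \emph{exactly one} counted function tuple. Concretely, one adds conjuncts requiring each $g_i$ to return the $\le$-least witness (exploiting the built-in order $\le$ and $\min$) at every argument tuple, and to take a fixed default value (say $\min$) whenever no witness exists, so that the witnessing function is uniquely determined by the rest of the assignment. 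Because the signature includes $\le$, $\textrm{BIT}$ and $\min$, such ``least-witness'' and ``default-value'' conditions are themselves expressible by universal \FO-formulae, keeping the overall formula in $\Pi_1$. This restores a bijection between satisfying assignments of $\varphi$ and those of $\varphi'$, yielding equal counts and hence $f \in \cPii{1}$.

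\medskip

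Finally I would remark that this argument uses $\cPii{1} \subseteq \cFO$ for the reverse inclusion (immediate), together with the preceding theorem $\cFO = \cP$ to identify the resulting class; combined, the collapse $\cFO = \cPii{1}$ follows. I expect the only delicate bookkeeping to be verifying that the canonicity conjuncts can be written universally and that they interact correctly with the remaining universal matrix $\psi'$ so no spurious alternation is reintroduced.
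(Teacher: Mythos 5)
Your proposal is correct and follows essentially the same route as the paper: the paper also eliminates existential quantifiers by Skolemization and forces uniqueness of the Skolem functions via the built-in order, applying the least-witness transformation $\exists y\,\theta(y) \leadsto \exists y\,(\theta(y) \land \forall z\,(\neg\theta(z) \lor y < z \lor y = z))$ to every existential quantifier before re-prenexing and Skolemizing. The only difference is cosmetic ordering — the paper embeds the canonicity condition into the formula before converting back to prenex form (so the extra $\forall z$ simply joins the universal prefix), whereas you add the least-witness/default conjuncts after Skolemizing — and the ``delicate bookkeeping'' you flag (keeping the canonicity conditions universal for the non-innermost existentials) is glossed over at exactly the same point in the paper's own argument.
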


\begin{proof}
Let $f \in \cFO$ via a $\FO$-formula $\varphi(\tu{f}, \tu{x})$ in prenex normal form. %We have to get rid of all existential quantifiers in $\varphi$ without changing the counted value. 
We show how to transform $\varphi$ to a $\Pii{1}$-formula also defining $f$.
As a first step, we change $\varphi$ in such a way that for each existential variable instead of ``there is an $x$'' we say ``there is a smallest $x$''. Formally, this can be done with the following transformation:
\[\exists y \theta(y) \leadsto \exists y (\theta(y) \land \forall z (\neg \theta(z) \lor y < z \lor y = z))\]
applied recursively to all  existential quantifiers in $\varphi$.  Note that now for every satisfied $\exists$-quantifier there is exactly one witness. 
%To transform the whole formula we apply this transformation recursively to all  existential quantifiers in $\varphi$, starting with the outermost. Now for every satisfied $\exists$-quantifier there is exactly one witness. 
%Let

For the sake of argument,  suppose that after the above transformation and re-conversion  to prenex normal form the formula $\varphi(\tu{f}, \tu{x})$ corresponds to $\varphi'(\tu{f}, \tu{x})$, where 
\[\varphi'(\tu{f}, \tu{x}) = \exists z_1 \forall y_1 \dots \forall y_{\ell-1} \exists z_\ell \psi(\tu{f}, \tu{x},z_1, \dots, z_\ell, y_1, \dots, y_{\ell-1}).\]
%be the formula we get by transforming $\varphi$ in said way.
Looking at the Skolemization of $\varphi'$, our transformation ensures that every existentially quantified variable  has a unique Skolem function. Thus,
\[\varphi''(\tu{f}, g_1, \dots, g_\ell) = \forall y_1 \dots \forall y_{\ell-1} \psi(\tu{f}, g_1, g_2(y_1), \dots, g_\ell(y_1, \dots, y_{\ell-1}), y_1, \dots, y_{\ell-1})\]
shows $f \in \cPii{1}$.

%For this, we make sure that for all existentially quantified variables there is only a unique satisfying assignment (for each assignment of the variables quantified before). This is done by saying ``there is a smallest $x$'' instead of ``there is an $x$''. Formally, this is done with the following transformation:
%\[\exists y \psi(y) \leadsto \exists y (\psi(y) \land \forall z (\neg \psi(z) \lor y < z \lor y = z)).\]
%By applying this transformation starting with the outermost existential quantifier, we get a new formula $\varphi'$ that only uses $\exists!$- and $\forall$-quantifiers. Now we can replace all $\exists!$-quantified variables by new free function variables of the right arity. By transforming $\exists$-quantifiers to $\exists!$-quantifiers, we assured that for each assignment of the $\forall$-quantified variables such a function depends on, there is only one possible function value. Thus, we don't change the counted value and $\varphi'$ is a $\Pi_1$-formula showing $f \in \cPii{1}$.
\end{proof}

Next we look at the lowest class in our hierarchy and separate it from \cAC.

\begin{theorem}\label{Sigma0_in_cAC}
$\cSigmai{0} \subsetneq \FO\textrm{-uniform }\cAC$
\end{theorem}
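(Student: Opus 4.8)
The plan is to prove the two inclusions separately: first $\cSigmai{0} \subseteq \FO\textrm{-uniform }\cAC$, and then strictness by exhibiting a witnessing function. For the inclusion I would not try to massage a quantifier-free formula into the prefix-restricted shape of $\cPiPrefix$ (the free function arguments resist this); instead I would build directly an $\FO$-uniform family of constant-depth, polynomial-size circuits whose proof-tree count equals the given function.

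Fix a quantifier-free $\varphi(\tu F, \tu x)$ defining $f \in \cSigmai{0}$, treating the free individual variables $\tu x$ as arity-$0$ function variables. The starting point is that $f(\enc_\sigma(\calA)) = \sum_{\tu c} |\{\tu f : \calA \vDash \varphi(\tu f, \tu c)\}|$, and that for fixed $\tu f, \tu c$ the matrix inspects each function only at the constantly many argument tuples produced while evaluating its (constantly many) terms. I would therefore index the top $\bigvee$-layer by a choice of $\tu c$ together with a \emph{trace} $J$, an assignment of universe elements to all term occurrences of $\varphi$ and of truth values to all atoms, keeping only those $(\tu c, J)$ for which $J$ is internally consistent (no argument tuple receives two values), agrees with the input on atoms over built-in and input relations, and makes the Boolean matrix true. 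Below each surviving $(\tu c, J)$ I place an $\bigwedge$ over all domain points of all functions: a point read by $J$ is pinned to its $J$-value by a single true leaf, while an unread point is given a free $n$-way choice $\bigvee_v \top$. The map sending a proof tree to the tuple $(\tu f, \tu c)$ it specifies is then a bijection onto the satisfying tuples, since the trace of $(\tu f, \tu c)$ is uniquely determined by $(\tu f, \tu c)$ and the free choices exactly reconstruct $\tu f$ off the read points. Depth is constant, there are $n^{O(1)}$ pairs $(\tu c, J)$ and $n^{O(1)}$ points, and all gate predicates are $\FO$-definable using BIT, so the family is $\FO$-uniform.

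The main obstacle is precisely this inclusion. One must account for the large multiplicative factor coming from the function values that $\varphi$ never inspects, and do so even though nesting of function terms makes the set of inspected points depend on the function itself. The trace device is what tames both issues at once, turning one formula into a static constant-depth product-of-sums while preserving an exact proof-tree bijection; verifying internal consistency and the disjointness of the tuple-sets produced by distinct traces is the delicate bookkeeping.

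For strictness I would use $f_{\exp}(w) = 2^{|S|}$, where $|S|$ is the number of $1$'s in $w$ over the vocabulary $\tString$. It lies in $\FO\textrm{-uniform }\cAC$ via the depth-$O(1)$ circuit $\bigwedge_x \bigl( (S(x) \land (\top \lor \top)) \lor (\neg S(x) \land \top) \bigr)$, whose $x$-th factor has two proof trees when $S(x)$ holds and one otherwise. To see $f_{\exp} \notin \cSigmai{0}$, I would establish a dichotomy: if the defining formula uses only arity-$0$ free functions, its count is at most $n^{O(1)}$, hence polynomially bounded; if it uses some free function of arity $a \ge 1$, then any satisfying $\tu f$ may be altered freely at the $\ge n^a - O(1)$ points not inspected during evaluation, so every nonempty fibre has size $\ge n^{\,n-O(1)}$ and the value lies in $\{0\} \cup [n^{\,n-O(1)}, \infty)$. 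Now $f_{\exp}$ equals $2^n$ on the all-ones input, violating the first case for large $n$, and equals $1$ on the empty input, violating the second case since $1 \notin \{0\} \cup [n^{\,n-O(1)}, \infty)$ for large $n$. Hence no single $\cSigmai{0}$ formula defines $f_{\exp}$, and the inclusion is strict.
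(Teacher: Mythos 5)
Your proposal is correct, and for the inclusion it rests on the same counting decomposition as the paper: guess the (constantly many) values taken by the function-term occurrences of $\varphi$, check consistency against the input, and account for the unread function values by a free multiplicative factor. The difference is in the packaging. The paper first de-nests function terms (WLOG, via auxiliary free variables with unique assignments), replaces each application $F_i(e_{ij})$ by a fresh variable $d_{ij}$, and writes $f(\enc_\sigma(\calA))=\sum_{\tu{c},\tu{d}}[\calA\vDash\varphi'(\tu{c},\tu{d})]\cdot|G|$, where $|G|$ is given in closed form (a power of $|A|$ corrected by the \FO-definable coincidence sets $S_{ij}$) and then argued to be computable in \FO-uniform \cAC. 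Your trace device instead handles nested terms directly and realizes the count through an explicit proof-tree bijection, replacing the arithmetization of $|G|$ by the $\bigvee_v\top$ free-choice gadgets; this is arguably the cleaner route, and the one genuinely delicate point — disjointness of the fibres over distinct traces — which you correctly flag, does go through by induction on terms (any $\tu{f}$ extending the pinned partial function of $J$ evaluates every term occurrence to its $J$-value, hence has trace exactly $J$). For strictness the two arguments diverge more: the paper extracts from its inclusion proof the dichotomy ``polynomially bounded, or always divisible by $|w|^{\sum_i|w|^{c_i}-\mathrm{const}}$'' and separates with the single witness $f(w)=|w|^{\lceil|w|/2\rceil}$, whereas you use the weaker but equally effective dichotomy ``polynomially bounded, or nonzero values at least $n^{n-O(1)}$'' and separate with $2^{s}$ ($s$ the number of $1$s in the input), refuting the first horn on all-ones inputs and the second on all-zeros inputs, where the value $1$ is neither $0$ nor large. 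Both dichotomies and both witnesses are valid; your version needs two input families because your witness is sometimes small, while your lower-bound formulation avoids the divisibility bookkeeping of the paper's version, and your explicit argument that altering $\tu{f}$ at uninspected points preserves satisfaction (even under nesting, since the evaluation reads the same points) is exactly the justification the paper's divisibility claim implicitly relies on.
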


\begin{proof}
We start by showing the inclusion. Certain observations in that proof will then almost directly yield the strictness. Let $f \in \cSigmai{0}$ via the quantifier-free FO-formula $\varphi(x_1, \dots, x_k, F_1, \dots, F_\ell)$ over some vocabulary $\sigma$, where $x_1, \dots, x_k$ are free individual variables and $F_1, \dots, F_\ell$ are free function variables, that is,
\[f(\enc_\sigma(\calA)) = |\{(c_1, \dots, c_k, f_1, \dots, f_\ell) \mid \calA \vDash \varphi(c_1, \dots, c_k, f_1, \dots, f_\ell)\}|.\]
\WLOG we can assume that in $\varphi$ no nesting of functions occurs. If there is an occurrence of a function $G$ as an argument for function $H$, then we can replace the occurrence of $G$ by a new free variable and force this variable to be equal to the function value. This ensures that there is only one unique assignment to this new free variable.

Let $A \dfn \textrm{dom}(\calA)$. For all $i$, let $a_i$ be the arity of $F_i$ and let $m_i$ be the number of syntactically different inputs to occurrences of $F_i$ within $\varphi$. Furthermore, let $e_{i1}, \dots, e_{im_i}$ be those inputs in the order of their occurrence within $\varphi$ and let $\varphi'(x_1, \dots, x_k, y_{11}, \dots, y_{1m_1}, \dots, y_{\ell 1}, \dots, y_{\ell m_\ell})$ be $\varphi$ after replacing for all $i,j$ all occurrences of $F_i(e_{ij})$ by the new free variable $y_{ij}$. Let $m \dfn \sum_i{m_i}$.

Considering a fixed assignment to the variables $x_1, \dots, x_k$, the idea now is to use free individual variables in order to count the number of assignments to the terms $f_i(e_{ij})$ for all $(i,j)$. After that, all $f_i$ have to be chosen in accordance with those choices to get the correct number of functions that satisfy the formula. Formally, this is done as follows:
\begin{align*}
f(\enc_\sigma(\calA)) & = \sum_{\tu{c} \in A^k}
\sum_{\substack{(f_1, \dots, f_\ell) \in\\A^{A^{a_1}} \times \dots \times A^{A^{a_\ell}}}}
[\calA \vDash \varphi(c_1, \dots, c_k, f_1, \dots, f_\ell)]\\
& = \sum_{\tu{c} \in A^k} \quad \sum_{\tu{d} \in A^m} \sum_{(f_1, \dots, f_\ell) \in G} [\calA \vDash \varphi'(\tu{c}, \tu{d})],
\end{align*}
where $G \dfn \{\tu{f} \in A^{A^{a_1}} \times \dots \times A^{A^{a_\ell}} \mid \forall (i,j):\ \calA \vDash \ d_{ij} = f_i(e_{ij})\}$.

Since $[\calA \vDash \varphi'(\tu{c}, \tu{d})]$ does not depend on $(f_1, \dots, f_\ell)$, we can multiply by the cardinality of $G$ instead of summing:
\[
f(\enc_\sigma(\calA)) = \sum_{\substack{\tu{c} \in A^k,\\\tu{d} \in A^m}} [\calA \vDash \varphi'(\tu{c}, \tu{d})] \cdot |G|
\]
Now we are in a position to show $f \in \FO\textrm{-uniform } \cAC$.

The sum only has polynomially many summands and thus is obviously possible in \FO-uniform \cAC.

For $[\calA \vDash \varphi'(\tu{c}, \tu{d})]$, the circuit only has to evaluate a quantifier-free formula depending on an assignment that is given by the path from the root to the current gate. This is similar to the corresponding part of the proof of $\FO = \FO\textrm{-uniform } \cAC$ and thus can be done in $\AC \subseteq \cAC$.

For $|G|$ we first note that the total number of possible assignments for $\tu{f}$ is
\[|A^{A^{a_1}} \times \dots \times A^{A^{a_\ell}}| = |A|^{\sum_i{|A|^{a_i}}}.\]
In $G$, the choices for the variables $d_{ij}$ fix for all $i$ up to $m_i$ function values of $f_i$. This means, that at least $|A|^{a_i}-m_i$ choices of function values can be arbitrarily chosen. 

If for some $(i,j)$, $e_{ij}$ is semantically equal to $e_{ij'}$ for some $j' < j$, it has to hold that $d_{ij} = d_{ij'}$. Additionally, this reduces the amount of function values that are fixed by the $d_{ij}$ by 1. To make this formal we define for any $(i,j)$
\[S_{ij} = \{j' \mid j' < j \textrm{ and } \calA \vDash \ e_{ij} = e_{ij'}\}.\]
From the above considerations we get
\[
|G| = [\bigwedge_{(i,j)} \bigwedge_{j'} (j' \in S_{ij}) \to d_{ij} = d_{ij'}] \cdot |A|^{\sum_i{|A|^{a_i}} - \sum_i{m_i}} \cdot |A|^{\sum_{ij}{[S_{ij} \neq \emptyset]}}.
\]

Since the $a_i$ and $m_i$ are constants and $S_{ij}$ is \FO-definable, $|G|$ can be computed in \FO-uniform \cAC. This concludes the proof for $\cSigmai{0} \subseteq \FO\textrm{-uniform } \cAC$.

Now note that the proof above also shows that for any $\cSigmai{0}$-function $f$, either for all inputs $w$, $f(w)$ is polynomially bounded in $|w|$ or for all inputs $w$, $f(w)$ is always divisible by $|w|^{\sum_i{|w|^{c_i}} - \textrm{ konst}}$ for constants $c_i > 0$. Thus, the function $f(w) = |w|^{\lceil|w|/2\rceil} \in \FO\textrm{-uniform } \cAC$ is not in $\cSigmai{0}$ which means $\cSigmai{0} \neq \FO\textrm{-uniform } \cAC$.
\end{proof}

%\subsection{$\cSigmai{1} \subsetneq \cPiSk$}

\begin{theorem}\label{thm-numAC0 is not numP}
$\cPiSk \subsetneq \cPii{1}$.
\end{theorem}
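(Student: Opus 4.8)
The plan is to obtain the inclusion essentially for free from the characterizations already established and to concentrate all the effort on the strictness, which I would derive from a classical Boolean circuit lower bound rather than from any complexity-theoretic assumption.

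For the inclusion $\cPiSk \subseteq \cPii{1}$, I would invoke the preceding lemma, which gives $\cPiSk = \cPiPrefix$. A $\cPiPrefix$-witness is in particular a $\Pi_1$-formula (carrying only the additional syntactic restriction that each function variable is applied to a prefix of the universally quantified variables), and counting its satisfying assignments to the free function and individual variables is by definition a $\cPii{1}$ computation. Hence every $\cPiPrefix$-function lies in $\cPii{1}$. Equivalently, one may use $\cPiSk = \FO\textrm{-uniform }\cAC$ together with the trivial inclusion $\cAC \subseteq \cP = \cPii{1}$.

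The substantial part is the strict separation, and the key observation I would exploit is that the counting class $\cAC$ has a weak associated decision problem. If $f \in \FO\textrm{-uniform }\cAC$ is witnessed by a circuit family $\mathcal{C} = (C_n)_{n}$, then the support $\{w \mid f(w) > 0\}$ coincides with the language accepted by the Boolean circuit family $\mathcal{C}$, since $C_{|w|}(w)$ evaluates to true exactly when at least one proof tree exists. As $\mathcal{C}$ is \FO-uniform of constant depth and polynomial size, this language lies in $\FO\textrm{-uniform }\AC = \FO$, and hence in (non-uniform) $\AC^0$. By $\cPiSk = \FO\textrm{-uniform }\cAC$, the same holds for the support of any $\cPiSk$-function. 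It then suffices to exhibit a function in $\cPii{1}$ whose support is not in $\AC^0$: I would take $f = \chi_{\mathrm{PARITY}}$, the $0/1$-valued characteristic function of the binary strings with an odd number of ones. Being polynomial-time computable and $0/1$-valued, $f \in \cP = \cPii{1}$; but $\mathrm{PARITY} \notin \AC^0$ by the Furst--Saxe--Sipser/Ajtai/H\aa stad lower bound, so the support of $f$ is not in $\AC^0$, whence $f \notin \cPiSk$. This yields $\cPiSk \subsetneq \cPii{1}$.

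The hard part is the middle step: one must argue rigorously that every $\cAC$ (equivalently $\cPiSk$) function has its support decided by an \FO-uniform constant-depth polynomial-size Boolean circuit family, so that the full strength of the $\AC^0$ lower bound for $\mathrm{PARITY}$ can be brought to bear. The main points to check are that the reduction to negation normal form and the tree-unfolding used in the definition of proof trees do not change the Boolean function computed, and that the resulting support language genuinely falls under $\FO\textrm{-uniform }\AC = \FO \subseteq \AC^0$. Once this ``support is in $\AC^0$'' principle is in place, the separation is immediate from the classical circuit lower bound and requires no unproven hypotheses, matching the paper's claim of an unconditional separation.
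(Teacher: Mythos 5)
Your proof is correct, but it takes a genuinely different route from the paper's. The inclusion is handled the same way (via $\cPiSk = \cPiPrefix \subseteq \cPii{1}$, or equivalently $\cAC \subseteq \cP$), so the real difference is in the separation. The paper argues via the operator $\mathsf{C}$: it quotes $\TC = \mathsf{C}\cdot\cAC$ and $\PP = \mathsf{C}\cdot\cP$ and then invokes Allender's unconditional separation $\TC \neq \PP$; this produces no explicit separating function and rests on a \emph{uniform} lower bound (Allender's theorem concerns uniform $\TC$ --- whether $\PP$ has non-uniform $\TC$ circuits is open). You instead isolate a structural property of $\cPiSk = \FO\textrm{-uniform }\cAC$: the support $\{w \mid f(w) > 0\}$ of any such $f$ equals the language accepted by the witnessing Boolean circuit family, because a circuit in negation normal form (and its tree unfolding, both of which preserve the Boolean function computed) evaluates to $1$ on $w$ exactly when at least one proof tree exists; hence the support lies in $\FO\textrm{-uniform }\AC$, a fortiori in non-uniform $\AC$. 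Separating with the explicit witness $\chi_{\mathrm{PARITY}} \in \cP = \cPii{1}$ (a $0/1$-valued polynomial-time computable function), whose support $\mathrm{PARITY}$ is not in non-uniform $\AC$ by Furst--Saxe--Sipser/Ajtai/H\aa stad, is then sound. Both arguments are unconditional; yours buys an explicit separating function and extra robustness (it would separate even a non-uniform version of $\cAC$ from $\cPii{1}$, since it uses only a non-uniform circuit lower bound), while the paper's is a two-line reduction to known results that additionally exhibits the difference at the level of the associated threshold decision classes. The step you flag as the one needing care --- that the support of a $\cAC$ function falls under $\AC$ --- does go through under the paper's definition of proof trees, since the NNF requirement and the unfolding are exactly function-preserving transformations, so $f(w) > 0$ iff $C_{|w|}(w) = 1$.
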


\begin{proof}
From the above we know that the left class is equal to \FO-uniform $\cAC$ and the right class is equal to $\cP$. Strict inclusion now follows immediately from the following considerations:
Let $\mathcal{F}$ be a class of functions $\{0,1\}^* \rightarrow \mathbb{N}$. Then the class $\mathsf{C} \cdot \mathcal{F}$ is the class of all of languages $A$ for which there are $f,g \in \mathcal{F}$ \stfa $x \in \{0,1\}^*$,
$x \in A \Leftrightarrow f(x) > g(x)$.
In \cite{PAC0TC0} it was shown that $\TC = \mathsf{C} \cdot \cAC$. Also, it is well known that $\textrm{PP} = \mathsf{C} \cdot \cP$. Allender's separation $\TC \neq \textrm{PP}$ \cite{Allender99} now directly yields \FO-uniform $\cAC \neq \cP$.
\end{proof}

So far we have identified the following hierarchy:
\begin{equation}\label{first-inclusion-chain}
\cSigmai{0} \subsetneq \cPiSk = \cAC \subsetneq \cPii{1} = \cP.	
\end{equation}

Next we turn to the class \cSigmai{1} and show that it forms a different branch between $\cSigmai{0}$ and $\cPii{1}$.

%TODO: this also holds for #\Pi_1^{sk-weak}, could be noted in journal version
\begin{lemma}\label{l1}
There exists a function $F$  which is in $\cPiSk$ but not in $\cSigmai{1}$.
\end{lemma}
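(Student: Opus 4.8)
<br>

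The plan is to exhibit a concrete function $F$ that lies in $\cPiSk$ (equivalently, $\FO$-uniform $\cAC$) but provably escapes $\cSigmai{1}$. The natural strategy is to find a structural obstruction that every $\cSigmai{1}$-function must satisfy but that some $\cAC$-function violates. Recall that a $\cSigmai{1}$-function counts tuples $(\tu{f},\tu{c})$ satisfying a formula of the shape $\exists y_1 \dots \exists y_k\, \psi$ with $\psi$ quantifier-free, where the function variables are applied to arbitrary argument tuples. The key observation I would exploit is that the existential quantifiers, like the free function variables, contribute in a highly structured way to the count: for a fixed assignment to the individual and function variables, the inner $\exists$-block either is satisfiable (contributing) or not, and crucially the function values at arguments \emph{not} appearing syntactically in $\psi$ are completely unconstrained. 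This is exactly the phenomenon already used in the proof of Theorem~\ref{Sigma0_in_cAC}: unconstrained function values force a large multiplicative factor on the count.

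Concretely, I would first establish a divisibility or growth-rate normal form for $\cSigmai{1}$-functions analogous to the $\cSigmai{0}$ analysis. The idea is that for a $\Sigma_1$-formula, only finitely many (constantly many) argument tuples to each function variable appear in the matrix $\psi$, now possibly also involving the existentially quantified $y_i$. After fixing the $\exists$-witnesses and the free individual variables, the count of admissible functions again splits into a constrained part (the polynomially many fixed function values) and a free part of size $|A|^{\sum_i |A|^{a_i} - O(1)}$. The presence of existential quantifiers complicates this because different witness choices constrain different function values, but summing over the polynomially many witness assignments should yield that $F(w)$ for any $\cSigmai{1}$-function is either polynomially bounded or forced to have a large common factor of the form $|w|^{\sum_i|w|^{c_i}-\mathrm{konst}}$, exactly mirroring the remark at the end of Theorem~\ref{Sigma0_in_cAC}.

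Given such a normal form, I would then pick $F$ to be a function in $\cPiSk$ whose values break this dichotomy: a function that grows superpolynomially yet takes values, on infinitely many inputs, that are \emph{not} divisible by the required large power of $|w|$. A candidate is a variant of the $\cAC$-function $f(w)=|w|^{\lceil |w|/2\rceil}$ used in Theorem~\ref{Sigma0_in_cAC}, suitably perturbed (for instance adding a small polynomial correction, or taking a value coprime to $|w|$ on a suitable subsequence of lengths) so that it remains counting-realizable by a bounded-depth arithmetic circuit but fails every possible divisibility pattern a $\cSigmai{1}$-function could exhibit. The membership $F\in\cPiSk$ would be shown directly by describing a short $\Pi_1$-prefix formula, or equivalently an explicit bounded-depth $\FO$-uniform arithmetic circuit whose proof trees are counted by $F$.

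The main obstacle I expect is the $\cSigmai{1}$ lower bound, specifically handling the interaction between the existential quantifiers and the free function variables. Unlike the $\cSigmai{0}$ case, the witnesses for the $\exists y_i$ can themselves be used as arguments to the function symbols, so the set of ``pinned'' function values depends on the witness assignment, and naive summation over witnesses could in principle destroy the clean common-factor structure. The delicate part is to argue that, even after this summation, either the total remains polynomially bounded or a uniform large power of $|A|$ survives as a divisor (perhaps after grouping witness assignments by which function-value pattern they pin down). Making this counting argument precise, and ensuring the chosen $F$ simultaneously defeats \emph{all} parameter choices $(a_i, k, c_i)$ available to a hypothetical $\cSigmai{1}$-representation, is where the real work lies.
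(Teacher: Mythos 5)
Your lower-bound strategy rests on a lemma that is not just unproven but false: $\cSigmai{1}$-functions do \emph{not} satisfy the dichotomy ``polynomially bounded or divisible by $|w|^{\sum_i|w|^{c_i}-O(1)}$'' that you extract from the $\cSigmai{0}$ analysis. The reason the $\cSigmai{0}$ argument lifts is precisely what breaks here: in a quantifier-free matrix, the only function values that matter are those at the constantly many argument terms built from free variables and constants, so every other value is completely free, giving the huge forced power of $|A|$. Once existential quantifiers are present, a quantified variable can occur as a function argument, so the set of satisfying assignments to $f$ is a union over the $n^k$ witness choices of cylinder conditions, and its complement is a \emph{universal} condition constraining every value of $f$. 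Concretely, take the $\Sigma_1$-formula $\varphi(f)=\exists y\,\textrm{BIT}(\min,f(y))$ with $f$ unary: the associated count on a universe of size $n$ is $n^n-\lceil n/2\rceil^n$, which grows like $n^n$, yet for $n\equiv 2 \pmod 4$ it equals $(n/2)^n(2^n-1)$, an odd number, hence not divisible by $n$ even once. So no perturbed variant of $|w|^{\lceil|w|/2\rceil}$ can separate on these grounds. The paper's own Lemma~\ref{l2} is a second warning sign: $\cSigmai{1}$ contains a \cP-complete function ($\cDnfF$, whose values involve factors like $\lfloor n/2\rfloor^{|T|}\lceil n/2\rceil^{n-|T|}$), so the class is arithmetically far too rich for any rigid divisibility obstruction. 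The difficulty you flag in your last paragraph --- the interaction of witnesses with pinned function values --- is thus not a technical gap to be closed but a genuine counterexample to the plan; your proposal also never fixes a concrete $F$ or verifies its $\cPiSk$-membership, but that is secondary.

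The paper separates by a model-theoretic closure property instead of a number-theoretic one. It takes $F\in\cPiSk$ counting the Skolem functions of $z$ in $\forall x\forall y\exists z\,\psi$ over graphs with two constants $c\neq d$, so that $F(\enc_\tau(\calA))=2^{|E^\calA|}$; crucially $F$ is invariant under padding the universe with fresh isolated points, since $E$ is unchanged. If $F$ were counted by a $\Sigmai{1}$-formula $\phi(\tu x,\tu g)$, then --- because existential first-order formulae (also with function variables) are preserved under extensions --- every satisfying tuple $(\tu a,\tu g_0)$ over $\calA$ extends to the padded structure $\calA'$, and each positive-arity $g_0$ extends in at least two distinct ways on the new points, so the count strictly increases while $F$ does not: contradiction. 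Note how this sidesteps exactly the issue that kills your approach: preservation under extensions is insensitive to how witnesses and function values interact inside the matrix, whereas any divisibility pattern is destroyed by it. If you want to salvage your write-up, replace the dichotomy lemma by this kind of monotonicity-under-padding obstruction (the paper reuses the same technique in Sect.~\ref{sect-connection} to show $\#3\mathrm{DNF}\notin\cSigmai{1}$).
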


\begin{proof}Let $\tau=\{E,c,d, \leq,\textrm{BIT},\min\}$ where $E$ is a binary relation symbol and $c,d$ are constant symbols. Let us consider  the function $F$ defined by the number of Skolem functions of variable $z$ in the formula $\varphi = \forall x \forall y \exists z \ \psi(x,y,z)$ with
\[
\psi = (E(x,y)\rightarrow z=c \vee z=d) \wedge (\neg E(x,y) \rightarrow z=c).
\]
For a given $\tau$-structure $\calA$ with $c^\calA \neq d^\calA$, it is clear that:
\[
F(\enc_\tau(\calA)) = |\{ f \mid \calA \models   \forall x \forall y \ \psi(x,y,f(x,y))\}| = 2^{ |E^{\calA}|},
\]
since each edge gives rise to two possible values for $z=f(x,y)$ and  each non edge to only one value. Thus, $F \in \cPiSk$.

Suppose now that $F\in \cSigmai{1}$ i.e. that  there exists $\phi(\tu x,\tu g)\in \Sigmai{1}$ such that for all $\tau$-structures $\mathcal{G}$,
\[
F(\enc_\tau(\mathcal{G})) = |\{(\tu a,\tu g_0) \mid \mathcal{G} \models \phi(\tu a,\tu{g}_0)\}|
\]
and in particular for $\calA$ as above,
\[2^{|E^\calA|} = F(\enc_\tau(\calA)) = |\{(\tu a,\tu g_0) \mid \calA \models \phi(\tu a,\tu{g}_0)\}|.\]
Now consider the following structure $\calA'$ defined simply by  extending $\textrm{dom}(\calA)=\{0,...,n-1\}$ by two new elements, i.e., $ \textrm{dom}(\calA')=\{0,...,n+1\}$. Note that  $E^{\calA}=E^{\calA'}$, hence the two structures have the same number of edges. To make the presentation simpler, suppose $\tu g=g$ and that the arity of $g$ is one. Any given $g_0:\textrm{dom}(\calA)\longrightarrow \textrm{dom}(\calA)$, can be extended in several ways on the domain $\textrm{dom}(\calA')$ in particular as $g_1$ and $g_2$ below:
 \begin{itemize}
 \item $g_1(x)=g_0(x)$ for all $x\in \textrm{dom}(\calA)$ and $g_1(n)=c$, $g_1(n+1)=d$.
  \item $g_2(x)=g_0(x)$ for all $x\in \textrm{dom}(\calA)$ and $g_2(n)=d$, $g_2(n+1)=c$.
 \end{itemize} 
 
 Formulas in $\Sigmai{1}$ are stable under extension of models so if  $\tu a$ and $\tu g_0$ are such that $\calA \models \phi(\tu a,\tu{g}_0)$ then $\calA' \models \phi(\tu a,\tu{g}_1)$ and $\calA' \models \phi(\tu a,\tu{g}_2)$. Hence,
 \[
 |\{(\tu a,\tu g') \mid \calA' \models \phi(\tu a,\tu{g}')\}| > |\{(\tu a,\tu g_0): (\calA,\tu a,\tu g_0)\models \phi(\tu x,\tu g)\}|. \] 
On the other hand,  $F(\enc_\tau(\calA)) = F(\enc_\tau(\calA'))$ holds, hence our assumpion that  $\phi(\tu x,\tu g)\in \Sigmai{1}$ defines $F$ has led to a contradiction.
\end{proof}

For the opposite direction, we first show the following lemma.

\begin{lemma}\label{lem-cDnf-numP-complete}
The function $\cDnf$ is complete for \cP under \AC-Turing-reductions.
\end{lemma}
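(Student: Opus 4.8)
The plan is to establish the two directions of completeness separately.

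\textbf{Membership} $\cDnf \in \cP$ is the easy part and is essentially already recorded: by Example~\ref{example-3DNF} we have $\cDnf \in \cSigmaiRel{1}$, and $\cSigmaiRel{1} \subseteq \cP$ by Theorem~\ref{thm-saluja-hierarchy}. Alternatively, a nondeterministic machine can guess an assignment together with a witnessing disjunct and verify both in polynomial time, so that its number of accepting paths equals the number of satisfying assignments of the input formula.

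For \textbf{hardness} I would reduce an arbitrary $f \in \cP$ to $\cDnf$. Fix a nondeterministic Turing machine $M$ whose number of accepting paths on $x$ is $f(x)$, running in polynomial time. The first step is the classical \emph{parsimonious} Cook--Levin construction: from $x$ one produces a propositional formula $\phi_x$ over variables describing the computation tableau of $M$ on $x$, such that the satisfying assignments of $\phi_x$ are in bijection with the accepting computation paths of $M$ on $x$; hence $\#\mathrm{sat}(\phi_x) = f(x)$. Two features need care here. First, the map $x \mapsto \phi_x$ must be $\AC$-computable (indeed $\FO$-computable with built-in $\leq$ and $\textrm{BIT}$): this holds because each clause encodes a local consistency window of the tableau, and the placement of clauses is governed by simple arithmetic on cell indices, exactly as in the proof of $\FO = \FO\textrm{-uniform }\AC$. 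Second, after bringing $\phi_x$ into $3$-clause form the reduction must stay parsimonious: the auxiliary variables introduced by the standard Tseitin transformation are functionally determined by the original variables, so every satisfying assignment extends to exactly one satisfying assignment of the resulting 3CNF formula $\phi_x'$, preserving $\#\mathrm{sat}(\phi_x') = f(x)$.

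The final step turns counting for the 3CNF $\phi_x'$ into a single oracle call to $\cDnf$. Pushing the outer negation through $\phi_x'$ by De Morgan yields a 3DNF formula $\psi_x \equiv \neg\phi_x'$, again $\AC$-computable from $x$. If $m$ denotes the number of variables of $\phi_x'$, which depends only on $|x|$ and is polynomially bounded, then every truth assignment satisfies exactly one of $\phi_x'$ and $\psi_x$, so $\cDnf(\psi_x) = 2^m - \#\mathrm{sat}(\phi_x') = 2^m - f(x)$. The reduction therefore queries the oracle once on $\psi_x$ and returns $f(x) = 2^m - \cDnf(\psi_x)$. Since $m$ depends only on the input length, the constant $2^m$ (a $1$ followed by $m$ zeros) is available to the $\AC$ circuit family, and the subtraction of two numbers with polynomially many bits is computable in $\AC$ by a constant-depth carry-lookahead circuit. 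Hence the whole procedure is an $\AC$-Turing reduction, as required.

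The main obstacle I expect is not the complementation step but guaranteeing both properties of the first step \emph{simultaneously}: that the tableau-to-formula map is genuinely $\AC$-computable and that the passage to $3$-clause form remains parsimonious, so that the counts match on the nose. It is also worth noting that the complement trick is precisely what forces the reduction to be of Turing rather than many-one type, since it introduces the $\AC$-level post-processing $n \mapsto 2^m - n$; a many-one counting reduction would not permit this subtraction.
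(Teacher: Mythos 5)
Your proof is correct and takes essentially the same route as the paper: it likewise reduces from $\#3\mathrm{CNF}$ (whose $\cP$-completeness under $\AC$-reductions the paper cites as following from the standard, parsimonious Cook--Levin construction that you spell out) and uses the identical complementation trick $f(x) = 2^m - \cDnf(\psi_x)$ with $\AC$-computable negation and subtraction as post-processing. The only difference is one of detail, not of approach: you unpack the cited hardness of $\#3\mathrm{CNF}$, correctly insisting on biconditional (Tseitin-style) definitions so that the passage to $3$-clause form stays parsimonious, where the paper simply invokes the known result.
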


\begin{proof}
It is known that $\#3\mathrm{DNF}$ is $\cP$-complete under metric reductions. A metric reduction of the $\cP$-complete problem $\#3\mathrm{CNF}$ to $\#3\mathrm{DNF}$ is as follows: Given a 3CNF-formula $\varphi$ over $n$ variables, we first construct $\varphi'=\neg\varphi$. This is a 3DNF-formula. Obviously, the number of satisfying assignments of $\varphi$ is equal to $2^n$ minus the number of satisfying assignments of $\varphi'$. Since this reduction can be computed by an \AC-circuit and moreover $\#3\mathrm{CNF}$ is $\cP$-complete under \AC-reductions (as follows from the standard proof of the NP-completeness of SAT), $\#3\mathrm{DNF}$ is complete for $\cP$ under $\AC$-Turing-reductions.
\end{proof}

\begin{lemma}\label{lem-Sigma1-notin-PiSk}\label{l2}
There exists a function $F$ which is in $\cSigmai{1}$ but not in $\cPiSk$.
\end{lemma}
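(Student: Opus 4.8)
The plan is to exhibit a concrete function in $\cSigmai{1}$ and prove it cannot lie in $\cPiSk$. By Example~\ref{example-3DNF} we already know that $\cDnf \in \cSigmaiRel{1}$, and the natural candidate is a $\cSigmai{1}$-analogue of this counting problem, or $\cDnf$ itself (or a closely related padded variant). The guiding intuition is that $\cPiSk = \FO\textrm{-uniform }\cAC$, which by Theorem~\ref{thm-numAC0 is not numP} is a \emph{proper} subclass of $\cP$, and that $\cDnf$ is $\cP$-hard by Lemma~\ref{lem-cDnf-numP-complete}. So if a $\cSigmai{1}$-function were $\cP$-hard under a sufficiently weak reduction, it could not be in $\cPiSk$ without collapsing $\FO\textrm{-uniform }\cAC$ to $\cP$.

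First I would verify that $\cDnf \in \cSigmai{1}$ in our \emph{function}-counting framework, not merely in the relational framework of Saluja et al. The formula $\Phi_\cDnf(T)$ from Example~\ref{example-3DNF} is a $\Sigma_1$-formula with one free monadic relation variable $T$; using the relation-to-function translation from the proof of $\cFO^\rel = \cFO = \cP$ (replace $T$ by a function variable $F$ forced to take only values $\min$ and its successor, and replace $T(z)$ by $F(z)=\min$), I would obtain a $\Sigma_1$-formula witnessing $\cDnf \in \cSigmai{1}$. Here I must be careful that the added conjuncts forcing the range of $F$ stay within $\Sigma_1$; since the restriction ``$F$ only takes the two smallest values'' is expressible by a universal statement, a naive translation may push us to $\Sigma_2$, so I would instead encode the single relevant bit directly via $\textrm{BIT}$ or restructure the formula so that the range restriction does not cost an extra quantifier alternation. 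This is a point to handle with care.

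Next I would derive the contradiction. Suppose for contradiction that this $\cSigmai{1}$-function $F$ (essentially $\cDnf$) were in $\cPiSk = \FO\textrm{-uniform }\cAC$. Since $\cDnf$ is complete for $\cP$ under $\AC$-Turing reductions (Lemma~\ref{lem-cDnf-numP-complete}), and $\FO\textrm{-uniform }\cAC$ is closed under $\AC$-Turing reductions (it is itself defined by constant-depth polynomial-size uniform circuits, so composing with an $\AC$-circuit reduction stays within the class), every $\cP$-function would then lie in $\FO\textrm{-uniform }\cAC$. This gives $\cP \subseteq \FO\textrm{-uniform }\cAC$, contradicting the strict separation $\FO\textrm{-uniform }\cAC \subsetneq \cP$ established in Theorem~\ref{thm-numAC0 is not numP} (via Allender's $\TC \neq \textrm{PP}$). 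Hence $F \in \cSigmai{1} \setminus \cPiSk$, as required.

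The main obstacle I anticipate is the closure of $\FO\textrm{-uniform }\cAC$ under the $\AC$-Turing reductions used in Lemma~\ref{lem-cDnf-numP-complete}, combined with the accompanying bookkeeping on built-in relations. One must check that plugging an arithmetic-$\AC$ counting computation into the Boolean $\AC$-reduction circuit indeed yields an $\FO$-uniform constant-depth polynomial-size arithmetic circuit; the subtraction $2^n - (\cdot)$ appearing in the metric reduction of Lemma~\ref{lem-cDnf-numP-complete} is not directly an arithmetic $\cAC$ operation, so I would need the $\mathsf{C}\cdot$ framing or a gap/difference argument rather than a literal closure claim, mirroring how Theorem~\ref{thm-numAC0 is not numP} itself goes through $\mathsf{C}\cdot\cAC = \TC$ and $\mathsf{C}\cdot\cP = \textrm{PP}$. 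Consequently, I expect the cleanest route is to argue at the level of the $\mathsf{C}\cdot$ operator: if $\cDnf \in \cPiSk$, then $\mathsf{C}\cdot\cP$ would collapse into $\mathsf{C}\cdot\cAC$, i.e. $\textrm{PP} = \TC$, again contradicting Allender's theorem. Making this operator-level reduction precise, rather than a naive pointwise closure argument, is the delicate step.
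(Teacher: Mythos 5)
Your skeleton matches the paper's proof (a BIT-modified $\#3\mathrm{DNF}$ placed in $\cSigmai{1}$, contradiction via Allender's $\TC\neq\PP$ through the $\mathsf{C}\cdot$ operator), and your self-correction about closure is exactly right: the paper never claims $\cPiSk$ is closed under \AC-Turing reductions, but works at the level of $\mathrm{FTC}^0$, first proving $\mathrm{FTC}^0\neq\cP$ via $\PP=\mathsf{C}\cdot\cP\subseteq\mathsf{C}\cdot\mathrm{FTC}^0=\TC$. However, there is a genuine gap at your membership step. Your stated goal, ``verify that $\cDnf\in\cSigmai{1}$,'' is provably unachievable: the final lemma of Sect.~\ref{sect-connection} shows $\#3\mathrm{DNF}\notin\cSigmai{1}$, by the same stability-under-extensions argument as in Lemma~\ref{l1} (adding one element to the structure doubles $\#3\mathrm{DNF}$ but multiplies any $\cSigmai{1}$-count by at least $n+1$). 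Moreover, your proposed fix --- replacing $T(x)$ by $\mathrm{BIT}(\min,f(x))$ --- silently changes the function: you now count functions $f$, not assignments $T$, so each satisfying assignment is counted with a multiplicative blowup; and on a universe of odd size $n$ this blowup is $\lfloor n/2\rfloor^{t}\cdot\lceil n/2\rceil^{\,n-t}$, which depends on the number $t$ of variables set true, so the resulting $\cSigmai{1}$-function is not even a fixed multiple of $\cDnf$. The ``closely related padded variant'' you mention in passing is thus not optional bookkeeping; it is the heart of the proof, and your argument then proceeds as if the function were still $\cDnf$.

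The paper supplies precisely the two missing pieces. First, it doubles the universe before counting, so that every position has exactly $|A|$ odd and $|A|$ even values available; then each satisfying assignment $T$ contributes the \emph{uniform} factor $|A|^{2|A|}\cdot 2^{|A|}$, giving $\cDnf(\enc_{\sigma_\dnf}(\calA)) = \cDnfF(\enc_\sigma(\calA'))\,/\,\bigl(|A|^{2|A|}\cdot 2^{|A|}\bigr)$. Second, it removes this factor inside the reduction: doubling the universe and computing the divisor are in $\FO$-uniform $\mathrm{FTC}^0$, and division itself is in $\FO$-uniform $\mathrm{FTC}^0$ by \cite{HesseDivTC0}, so $\cDnfF$ is $\cP$-complete under $\TC$-Turing reductions. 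Only then does the contradiction engine fire: $\cPiSk=\cAC\subseteq\mathrm{FTC}^0$, so $\cDnfF\in\cPiSk$ would give $\cP\subseteq\mathrm{FTC}^0$, contradicting $\mathrm{FTC}^0\neq\cP$ --- equivalent in substance to your $\mathsf{C}\cdot$ formulation. In short: right architecture and a correctly anticipated closure subtlety, but the blowup-normalization (universe doubling plus $\TC$-division), without which the exhibited $\cSigmai{1}$-function has no usable $\cP$-hardness, is absent, and the literal claim $\cDnf\in\cSigmai{1}$ on which your write-up rests is false.
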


\begin{proof}
First note that $\textrm{FTC}^0 \neq \cP$: Making use of the complexity-theoretic operator $\mathsf C$ (see proof of Theorem~\ref{thm-numAC0 is not numP}), we obtain
$\PP = \mathsf{C}\cdot \cP  \subseteq \mathsf{C}\cdot \mathrm{FTC}^0 = \TC$, but $\TC \neq \PP$ \cite{all96}.

We now show this lemma by modifying the counting problem \cDnf to get a \cP-complete function inside of \cSigmai{1}. If the reduction we use can be computed in $\textrm{FTC}^0$, the modified version of \cDnf can not be in $\cPiSk \subseteq \textrm{FTC}^0$, because this would contradict $\textrm{FTC}^0 \neq \cP$.

Consider the vocabulary $\sigma_\dnf$ and the formula $\Phi_\cDnf$ from example \ref{example-3DNF}. Let $\sigma$ be the vocabulary extending $\sigma_\dnf$ with built-in $\leq$, BIT and min. To get a function in \cSigmai{1}, we need to use a free function variable instead of the free relation variable $T$. This will surely blow up the function value. The idea is to make sure that compared to \cDnf, the function value only changes by a factor depending on the input length, not on the specific satisfying assignments. To achieve this, we interpret all even function values as false and all odd function values as true. Thus, the number of 1's and 0's in a satisfying assignment do not influence the blowup.

Following this idea we define for all $\sigma$-structures $\calA$
\[\cDnfF(\enc_{\sigma}(\calA)) = |\{(f) \mid \calA \vDash \Phi_\cDnfF(f)\}|,\]
where $\Phi_\cDnfF$ is $\Phi_\cDnf$ after replacing for all variables $x$ subformulae of the form $T(x)$ by $\textrm{BIT}(\min, f(x))$. By definition, $\cDnfF \in \cSigmai{1}$.

We now want to reduce \cDnf to \cDnfF. Since the idea with the blowup only depending on the input size only works, if the universe has even cardinality, the first step of the reduction is doubling the size of the universe. Let $\calA$ be a structure and $\calA'$ the structure that arises from $\calA$ by doubling the size of the universe. 
Let $A=\{0,\dots, n-1\}$ and $A'=\{0,\dots,2n-1\}$ be their respective universes. Each assignment for $T$ with $\calA \vDash \Phi_\cDnf(T)$ gives rise to the following set of assignments for $f$ with $\calA' \vDash \Phi_\cDnfF(f)$:
\[S_T = \{f:A' \rightarrow A' \mid \textrm{for all } x \in A: f(x) \equiv 1 \mod 2 \quad \Leftrightarrow \quad T(x)\}.\]

These sets are disjunct and by definition of $\Phi_\cDnfF(f)$ their union is equal to $\{f \mid \calA' \vDash \Phi_\cDnfF(f)\}$. For each $T$, the functions $f$ in $S_T$ have $n$ choices for $f(x)$, if $x \in A$ and $2n$ choices, if $x \notin A$. Thus, $|S_T| = |A|^{|A|} \cdot (2\cdot |A|)^{|A|}$, yielding
\[\cDnf(\enc_{\sigma_\dnf}(\calA)) = \frac{\cDnfF(\enc_{\sigma}(\calA'))}{|A|^{2|A|} \cdot 2^{|A|}}.\]

Doubling the size of the universe can be done in \FO-uniform $\textrm{FTC}^0$ by adding the adequate number of 0-entries in the encodings of all relations.

The term $|A|^{2|A|} \cdot 2^{|A|}$ can be computed in \FO-uniform $\cAC \subseteq \FO\textrm{-uniform } \textrm{FTC}^0$ and division can be done in $\FO\textrm{-uniform } \textrm{FTC}^0$ due to \cite{HesseDivTC0}.

Since \cDnf is \cP-complete under \AC-Turing-reductions by Lemma \ref{lem-cDnf-numP-complete}, this means that \cDnfF is \cP-complete under \TC-Turing-reductions.
\end{proof}

So Lemmas \ref{l1} and \ref{l2} show that $\cSigmai{1}$ and $\cPiSk$ are incomparable, and we obtain the inclusion chain 
$\cSigmai{0} \subsetneq \cSigmai{1} \subsetneq \cPii{1} = \cP$.
Together with (\ref{first-inclusion-chain}) we therefore obtain
\[\cSigmai{0} 
\begin{array}{@{\ }r@{\ }c@{\ }l@{\ }}
\rotatebox[origin=c]{30}{$\subsetneq$} 
& \raisebox{.75ex}{\cAC = \cPiPrefix} 
& \rotatebox[origin=c]{-30}{$\subsetneq$} 	\\
\rotatebox[origin=b]{-20}{$\subsetneq$} 
& \raisebox{-.75ex}{\cSigmai{1}} 
& \rotatebox[origin=c]{20}{$\subsetneq$}	
\end{array}
\cPii{1} = \cFO = \cP
\tag{\ref{eqInclusions}}
\]

\section{Hierarchy Based on the Number of Universal Variables}
\label{sect-varhierarchy}

In this section we study another hierarchy in \cFO based on syntactict restrictions, this time given by the number of universal variables.

Let $\cPiik{1}{k}$ denote the class of $\Pi_1$ formulae of the form
\[\forall x_1\cdots \forall x_m \psi,  \]
where $m\le k$, and $\psi$ is a quantifier-free formula. We will show that 
\begin{equation}\label{var_hierarchy}
  \cPiik{1}{k} \subsetneq \cPiik{1}{k+1}, 
\end{equation}
for all $k\ge 1$. These results can be shown by applying  a result of Grandjean and Olive which we will discuss next.
\begin{definition}
 We denote by   $\ESOfvar{k}$  the class of   ESO-sentences  in Skolem normal form
\begin{equation*}
\exists f_1\ldots\exists f_n \forall x_1\ldots \forall x_r \psi,
\end{equation*}
where $r\le k$, and $\psi$ is a quantifier-free formula.
\end{definition}

 It was shown in  \cite{grandjean04}  that with respect to any finite signature $\sigma$
\[ \ESOfvar{k}=\NTIME_{\RAM}(n^k), \]
where $\NTIME_{\RAM}(n^k)$ denotes the family  of classes of $\sigma$-structures that can be recognized by a non-deterministic RAM in time $O(n^k)$. Note that by  \cite{cook72},
\[ \NTIME_{\RAM}(n^k)\subsetneq \NTIME_{\RAM}(n^{k+1}).\]
These results can be used to show the strictness of the variables hierarchy (see \eqref{var_hierarchy}). For the case $k=1$ of Theorem \ref{hierarchy_k} we use  the following lemma which holds for vocabularies equipped with built-in order  $<$ and constants $min$ and $max$. 
\begin{lemma}\label{suc_lemma}
Let $\sigma$ be a vocabulary including built-in $<$, $min$ and $max$. Then there is a formula $Succ\in \Pi^{1}_1$ with free unary function variables $s$ and $p$ such that for all $\calA$, ${\bf s}$, and  ${\bf p}$,  $\calA\models Succ({\bf s},{\bf p})$ iff
\begin{enumerate}
\item  for all $e<max^{\calA}$: ${\bf s}(e)=e+1$, and  ${\bf s}(max^{\calA})=max^{\calA}$,
\item  for all $e>min^{\calA}$: ${\bf p}(e)=e-1$, and  ${\bf p}(min^{\calA})=min^{\calA}$.
\end{enumerate}
\end{lemma}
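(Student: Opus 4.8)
The plan is to exploit the fact that our universe is always the discrete linear order $\{0,\dots,n-1\}$, so that the successor $e+1$ of an element $e < \max^{\calA}$ is nothing but its \emph{immediate} $<$-successor, and dually $e-1$ is the immediate $<$-predecessor of $e > \min^{\calA}$. Since the built-in order $<$ is available, I can describe the graphs of the intended functions ${\bf s}$ and ${\bf p}$ without any reference to a successor relation: for every non-maximal $x$ the value $s(x)$ is to lie strictly above $x$ with no element strictly in between, and $s$ is to fix $\max$; symmetrically for $p$ at $\min$.

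Concretely, I would take $Succ(s,p)$ to be the universal formula
\[
Succ(s,p) \;\equiv\; \forall x \, \forall y \, \bigl( \alpha(x,y) \wedge \beta(x,y) \bigr),
\]
where
\[
\alpha(x,y) \;\equiv\; \bigl(x = \max \rightarrow s(x) = \max\bigr) \wedge \bigl(x < \max \rightarrow (x < s(x) \wedge \neg(x < y \wedge y < s(x)))\bigr)
\]
and
\[
\beta(x,y) \;\equiv\; \bigl(x = \min \rightarrow p(x) = \min\bigr) \wedge \bigl(\min < x \rightarrow (p(x) < x \wedge \neg(p(x) < y \wedge y < x))\bigr).
\]
The matrix is quantifier free, only universal first-order quantifiers occur, and both function symbols appear solely in the form $s(x)$ and $p(x)$, i.e. applied to the single universally quantified variable $x$; hence $Succ \in \Pi^1_1$ as required. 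Note that two universal variables suffice, since the same $y$ serves both to forbid an element strictly between $x$ and $s(x)$ and between $p(x)$ and $x$.

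For correctness I would argue both directions. If conditions (1) and (2) hold then, reading off the defining clauses, $\alpha$ and $\beta$ are satisfied for every choice of $x,y$, so $\calA \models Succ({\bf s},{\bf p})$. Conversely, if $\calA \models Succ({\bf s},{\bf p})$, then for $x = \max^{\calA}$ the first conjunct of $\alpha$ forces ${\bf s}(\max^{\calA}) = \max^{\calA}$, while for $x < \max^{\calA}$ the clause $x < {\bf s}(x)$ together with $\forall y\,\neg(x < y \wedge y < {\bf s}(x))$ pins ${\bf s}(x)$ down to the least element above $x$; the argument for ${\bf p}$ is dual.

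The only point requiring care — and what I would regard as the crux — is the uniqueness step in the backward direction: the equivalence of ``immediate $<$-successor'' with ``$+1$'' is precisely where the hypothesis that the universe is the standard discrete order $\{0,\dots,n-1\}$ (with $\min,\max$ interpreted as $0$ and $n-1$) is used, so I would invoke it explicitly rather than treat it as automatic. Everything else reduces to a routine check of the boundary cases at $\min$ and $\max$ and of the claimed syntactic shape.
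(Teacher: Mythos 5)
Your formula is semantically correct, but it does not prove the lemma as stated, because the superscript in $\Pi^1_1$ is not decoration: in the notation fixed at the start of Sect.~\ref{sect-varhierarchy}, $\Pi^k_1$ is the class of $\Pi_1$-formulae with at most $k$ universal variables, so the lemma demands a $Succ$ with a \emph{single} universal quantifier. Your formula has prefix $\forall x\forall y$, and — as you observe yourself — the second variable $y$ is genuinely needed in your approach to express ``no element lies strictly between $x$ and $s(x)$''; hence it only witnesses $Succ\in\Pi^2_1$. This is not a cosmetic defect. The lemma is invoked precisely in the case $k=1$ of Theorem~\ref{hierarchy_k}, where the quantifier-free matrix of $Succ$, with its universal variable identified with the unique universal variable $x_1$ of $\chi$, is conjoined into a sentence that must remain in $\ESOfvar{1}$ in order to contradict the nondeterministic RAM time hierarchy; with a two-variable $Succ$ the resulting sentence lands in $\ESOfvar{2}$ and the contradiction evaporates. (For $k\ge 2$ a second universal variable is available anyway, which is exactly why the paper's proof of Theorem~\ref{hierarchy_k} dispenses with this lemma in that case.)

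The missing idea is how to eliminate the betweenness test with one variable. The paper does this by exploiting both function variables at once: instead of saying ``$s(x)$ is minimal above $x$'', it requires $s$ and $p$ to be strictly monotone and mutually inverse on the interior, universally quantifying the single variable $x$ over the conjunction of
\[
(x<max \wedge min<x)\rightarrow\bigl((x<s(x)\wedge p(x)<x)\wedge(p(s(x))=s(p(x))=x)\bigr)
\]
and $p(min)=min\wedge s(max)=max$. Correctness then needs a short induction from $min$ upwards rather than your pointwise reading: $p(min+1)<min+1$ forces $p(min+1)=min$, whence $s(p(min+1))=min+1$ gives ${\bf s}(min)=min+1$; and once ${\bf p}(x')=x'-1$ and ${\bf s}(x'-1)=x'$ are forced for all $x'\le x$, any value ${\bf p}(x+1)=j<x$ would contradict $s(p(x+1))=x+1$ since ${\bf s}(j)=j+1$ is already pinned down, so ${\bf p}(x+1)=x$ and ${\bf s}(x)=x+1$. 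Your verification of the two-variable formula is fine as far as it goes (including the explicit appeal to the universe being the standard order $\{0,\dots,n-1\}$, which the paper's argument needs at the same point), but without this inverse-function trick — or some other single-variable device — the proposal does not establish the statement it claims.
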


%\juha{Heribert asked: Do we actually need max? This choice of built-in symbols does not fit so nicely with the other results of our paper. Juha will write an explanation that we only need it in case of exactly one universal variable.}

\begin{proof}
It is straighforward to check that $Succ$ can be defined by  universally quantifying $x$ over the conjunction of the following clauses: 
\begin{itemize}
\item $(x<max \wedge min<x)\rightarrow \big((x<s(x)\wedge p(x)<x)\wedge(p(s(x)) = s(p(x)) = x)\big)$,
\item $p(min)=min \wedge s(max)=max$.
\end{itemize}
\end{proof}

\begin{theorem}\label{hierarchy_k} Let $k\ge 1$. Then (assuming the auxiliary built-in constant $max$ for the case $k=1$)
$$ \cPiik{1}{k}  \subsetneq \cPiik{1}{k+1}.$$ 
\end{theorem}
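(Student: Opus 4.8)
The inclusion $\cPiik{1}{k}\subseteq\cPiik{1}{k+1}$ is immediate, since a formula with at most $k$ universal variables is a special case of one with at most $k+1$. The whole point is strictness, and the plan is to transport the strictness of the nondeterministic RAM time hierarchy into our counting world by passing through the \emph{supports} of the counted functions, that is, the sets of structures on which the function value is nonzero.

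The bridge has two halves. First, if $f\in\cPiik{1}{k}$ is given by $\forall x_1\cdots\forall x_m\,\psi$ with $m\le k$ and free (function and individual) variables $\tu F$, then the set $\{\calA : f(\enc_\sigma(\calA))\ge 1\}$ is defined by the sentence $\exists\tu F\,\forall x_1\cdots\forall x_m\,\psi$ obtained by existentially quantifying all counted variables (reading free individual variables as arity-$0$ functions). This sentence lies in $\ESOfvar{k}$, so the support of every $\cPiik{1}{k}$-function is an $\ESOfvar{k}$-language. Conversely, any $\ESOfvar{k+1}$-language, say defined by $\exists f_1\cdots\exists f_n\,\forall x_1\cdots\forall x_r\,\psi$ with $r\le k+1$, is exactly the support of the $\cPiik{1}{k+1}$-function that on input $\calA$ counts the tuples $(f_1,\dots,f_n)$ with $\calA\models\forall x_1\cdots\forall x_r\,\psi$.

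Strictness then follows by contradiction. By the Grandjean--Olive equivalence $\ESOfvar{j}=\NTIME_{\RAM}(n^j)$ and Cook's strict hierarchy $\NTIME_{\RAM}(n^k)\subsetneq\NTIME_{\RAM}(n^{k+1})$ \cite{grandjean04,cook72}, I fix a language $L\in\ESOfvar{k+1}\setminus\ESOfvar{k}$ and realize it, by the second half of the bridge, as the support of some $F\in\cPiik{1}{k+1}$. Were $F$ already in $\cPiik{1}{k}$, the first half of the bridge would place $L=\{\calA : F(\enc_\sigma(\calA))\ge 1\}$ in $\ESOfvar{k}$, contradicting the choice of $L$. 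Hence $F\in\cPiik{1}{k+1}\setminus\cPiik{1}{k}$, which is the desired strict separation.

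The subtle point---and the reason for the extra hypothesis---is matching the built-in symbols demanded by the Grandjean--Olive characterization, which is naturally phrased with a successor at hand, to the order-based built-ins $\le,\textrm{BIT},\min$ of our fragment. For $k\ge 2$ this is harmless: successor (and $max$) can be defined from $\le$ inside the $\Pi_1$-matrix using one of the at least two universal variables, so the equivalence applies verbatim. At the bottom level $k=1$ a single universal variable cannot define successor from order, which is exactly where Lemma~\ref{suc_lemma} enters. Assuming the built-in constant $max$, it provides unary function variables $s,p$ and a $\Pi^{1}_1$-constraint $Succ(s,p)$ using just one universal variable. The decisive feature in the counting setting is that $Succ$ pins down $s$ and $p$ uniquely, so adjoining the conjunct $Succ(s,p)$ and counting over $s,p$ as well multiplies each function value by exactly $1$: the counted function is untouched while the matrix gains the successor and predecessor it needs. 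I expect this count-preserving way of importing successor at level $k=1$, rather than the transfer argument itself, to be the main obstacle.
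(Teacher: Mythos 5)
Your overall route is exactly the paper's: transfer Cook's separation $\NTIME_{\RAM}(n^k)\subsetneq\NTIME_{\RAM}(n^{k+1})$ \cite{cook72} through the Grandjean--Olive equivalence $\ESOfvar{k}=\NTIME_{\RAM}(n^k)$ \cite{grandjean04}, realize a separating language $L\in\ESOfvar{k+1}\setminus\ESOfvar{k}$ as the support of a $\cPiik{1}{k+1}$-function, and derive the contradiction by existentially closing a hypothetical $\cPiik{1}{k}$-representation of that function. But there is a genuine gap at the one step the paper's proof actually labors over: the counted \emph{free individual variables}. You dispose of them by ``reading free individual variables as arity-$0$ functions,'' so that the support sentence $\exists\tu F\,\forall x_1\cdots\forall x_m\,\psi$ allegedly ``lies in $\ESOfvar{k}$.'' However, $\ESOfvar{k}$ is used as a black box in the exact Skolem normal form of \cite{grandjean04} --- an existential block of function quantifiers followed by at most $k$ universal first-order quantifiers --- and the individual existentials $\exists\tu y$ arising from the counted individual variables cannot simply be absorbed into it; this is precisely why the theorem carries the caveat about $max$ at $k=1$, which your main argument would render inexplicable. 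The paper eliminates these variables by replacing each $y_i$ with $g_{y_i}(x_1)$ for a fresh unary function symbol and forcing constancy: for $k\ge 2$ via the conjunct $g_{y_i}(x_1)=g_{y_i}(x_2)$ (no successor is needed at all, contrary to your plan of ``defining successor inside the matrix''), and for $k=1$ --- where a single universal variable cannot express constancy from order alone --- via the conjunct $g_{y_i}(x_1)=g_{y_i}(s(x_1))$ together with the conjunct $Succ$ of Lemma~\ref{suc_lemma}, which is where the built-in constant $max$ enters.

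Your closing paragraph does invoke Lemma~\ref{suc_lemma}, but it aims the lemma at a phantom problem and plugs it in at the wrong place. The difficulty is not ``matching the built-in symbols demanded by the Grandjean--Olive characterization'': the paper cites that equivalence as holding with respect to any finite signature, so no successor needs to be imported for its sake. Moreover, the count-preservation property you single out as decisive --- that $Succ$ pins down $s$ and $p$ uniquely, so counting over them multiplies each value by exactly $1$ --- is never needed anywhere: every step of the argument compares only supports, i.e., whether the count is nonzero, so $s$ and $p$ are adjoined \emph{existentially} to the decision sentence $\exists\tu g\,\exists\tu g_y\,\exists s\,\exists p\,\chi$ defining $L$, not to any formula whose assignments are counted. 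Once you redirect your $k=1$ mechanism to its actual job --- eliminating the counted individual variables from the hypothetical $\cPiik{1}{k}$-formula when forming the $\ESOfvar{k}$ sentence --- and use the two-variable constancy trick for $k\ge 2$, your argument coincides with the paper's.
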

\begin{proof} We consider first the case $k=1$. Let us fix $\sigma=\{<,\textrm{BIT},min,max,P\}$, where $P$ is unary. By the above there exists  a sentence $\exists f_1\cdots \exists f_n\psi \in \ESOfvar{k+1}[\sigma]$ defining a binary language $L$ which cannot be defined by any sentence $\chi \in \ESOfvar{k}[\sigma]$. We claim that the function $F$ associated with the formula $\psi(f_1,...,f_n)\in \Pi^{k+1}_1$,
\[ F(\enc_\sigma(\calA)) =  |\{ (f_1,...,f_n): \calA\models  \psi(f_1,...,f_n)\}|,
\]
 is not a member of  $\cPiik{1}{k}$. For a contradiction, assume that $F\in \cPiik{1}{k}$. Then there exists a formula $\chi(\tu y, \tu g)\in \Pi^{k}_1$ such that  
\[ F(\enc_\sigma(\calA)) =  |\{(\tu y, \tu g): \calA\models \chi(\tu y, \tu g) \}|
\]
By the above,  the sentence $\exists \tu g \exists \tu y \chi$ defines the language $L$. The variables $\tu y=y_1,\ldots,y_r$ can be replaced in $\chi$ by fresh unary function symbols $ g_{y_1},...,g_{y_r}$ whose interpretations are forced to be  constant functions in the following way, $x_1$ being the universal variable in $\chi$.
\begin{enumerate}
\item We replace all occurrences of $y_i$ in $\chi$ by the term  $g_{y_i}(x_1)$,
\item  We add to the quantifier-free part of $\chi$ a  conjunct $g_{y_i}(x_1)=g_{y_i}(s(x_1))$, 
for $1 \le i \le r$,
\item We add to the quantifier-free part of $\chi$ the conjunt  $Succ(x_1)$ defined in Lemma \ref{suc_lemma} which forces the interpretation of the unary function $s$ to be the unique successor function associated to $<$. 
\end{enumerate}
 Now   $\exists \tu g \exists \tu g_y \exists s\exists p \chi \in \ESOfvar{k}$,  and it defines the language $L$. But this contradicts the assumption that $L$ cannot be defined any sentence of $\ESOfvar{k}$.

Let us then consider the case $k\ge 2$. Now $\sigma=\{<,\textrm{BIT},min,P\}$, and we proceed analogously to the case $k=1$ except that  the
 formulae of items 2. and 3. are replaced by formulae $g_{y_i}(x_1)=g_{y_i}(x_2)$, where $x_1, x_2$ are two different universal variables in $\chi$.
 \end{proof}

It is an interesting open question to study the relationship of  $\cAC$ with the classes $\cPiik{1}{k}$.

\section{$\cAC$ compared to the classes from Saluja et al.}
\label{sect-connection}

In this section we study the relationship of $\cAC$ to the syntactic classes introduced in \cite{DescCompNumP}. As in \cite{DescCompNumP},
 these classes are defined assuming a built-in order relation only.

\begin{theorem} 
\begin{itemize}
\item $\cSigmaiRel{0}\subsetneq \cAC$,
\item Let $\mathcal{C}\in \{\cSigmaiRel{1},  \cPiiRel{1}, \cSigmaiRel{2} \}$. 
Then the following holds: $\cAC \not \subseteq \mathcal{C}$ and   $ \mathcal{C}\not \subseteq \cAC $.
\end{itemize}
\end{theorem}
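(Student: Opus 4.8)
The plan is to reduce the statement to three separations. The first bullet is $\cSigmaiRel{0}\subsetneq\cAC$. For the second bullet, note that $\cSigmaiRel{1}\subseteq\cPiiRel{1}\subseteq\cSigmaiRel{2}$, so it suffices to prove the two non-inclusions $\cSigmaiRel{1}\not\subseteq\cAC$ and $\cAC\not\subseteq\cSigmaiRel{2}$: the former forces $\mathcal{C}\not\subseteq\cAC$ for all three classes (each contains $\cSigmaiRel{1}$), and the latter forces $\cAC\not\subseteq\mathcal{C}$ for all three (each is contained in $\cSigmaiRel{2}$).

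For $\cSigmaiRel{0}\subseteq\cAC$ I would argue exactly as in Theorem~\ref{Sigma0_in_cAC}. Given $f\in\cSigmaiRel{0}$ via a quantifier-free formula $\varphi(R_1,\dots,R_k,x_1,\dots,x_\ell)$ and a fixed assignment $\tu{c}$ to the individual variables, $\varphi$ depends on the relation variables only through the constantly many atoms $R_i(\tu{t})$ whose argument tuples come from $\tu{c}$ and the constants of $\varphi$; hence the number of satisfying relation tuples factors as $N_{\tu c}\cdot 2^{\sum_i n^{a_i}-m_{\tu c}}$, where $m_{\tu c}$ is the number of these atoms and $N_{\tu c}\le 2^{m_{\tu c}}$ counts the truth patterns satisfying $\varphi$. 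Summing over the polynomially many $\tu c$ expresses $f$ as a polynomial-length sum whose summands are a bounded count $N_{\tu c}$ (evaluated in $\AC$ as in the proof of $\FO=\FO\text{-uniform }\AC$) times a power of two produced by a times-gate over copies of a two-proof-tree gadget; all of this is available in $\cAC$. Strictness is obtained by mirroring the divisibility argument at the end of Theorem~\ref{Sigma0_in_cAC}, now with powers of two in place of powers of $|A|$: every $\cSigmaiRel{0}$-function is either polynomially bounded (if all $R_i$ have arity $0$) or divisible by $2^{\,n-O(1)}$, whereas $\cAC$ contains superpolynomial functions that are odd (e.g.\ $w\mapsto 3^{|w|}$), so no such function lies in $\cSigmaiRel{0}$.

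For $\cSigmaiRel{1}\not\subseteq\cAC$ I would use the $\cP$-complete member of $\cSigmaiRel{1}$. By Example~\ref{example-3DNF} we have $\cDnf\in\cSigmaiRel{1}$, and by Lemma~\ref{lem-cDnf-numP-complete} $\cDnf$ is $\cP$-complete under $\cAC$-Turing reductions. Assume $\cSigmaiRel{1}\subseteq\cAC$. Since $\cAC\subseteq\textrm{FTC}^0$ and $\textrm{FTC}^0$ is closed under $\cAC$-Turing reductions, completeness yields $\cP\subseteq\textrm{FTC}^0$; applying the operator $\mathsf{C}\cdot$ then gives $\PP=\mathsf{C}\cdot\cP\subseteq\mathsf{C}\cdot\textrm{FTC}^0=\TC$, contradicting Allender's separation $\TC\neq\PP$. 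Hence $\cSigmaiRel{1}\not\subseteq\cAC$.

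The remaining and hardest part is $\cAC\not\subseteq\cSigmaiRel{2}$, i.e.\ producing an explicit $h\in\cAC\setminus\cSigmaiRel{2}$. A growth argument cannot succeed: $\cSigmaiRel{2}$-functions of relation-arity $b$ already attain values $2^{\Theta(n^b)}$ for arbitrarily large $b$, hence outgrow every fixed $\cAC$-function. The decisive structural difference is that the classes of Saluja et al.\ carry only built-in order, while the identity $\FO=\FO\text{-uniform }\AC$ crucially exploits $\textrm{BIT}$. I would therefore take $h$ to be the $0/1$-valued characteristic function of an $\AC$-language whose definition genuinely requires $\textrm{BIT}$ (addressing positions at $\textrm{BIT}$-definable indices), which is easily seen to be in $\cAC$, and prove $h\notin\cSigmaiRel{2}$ by a non-definability argument for the counting logic over order-only structures — an Ehrenfeucht--Fra\"{\i}ss\'{e} or Hanf-locality argument in the spirit of the preservation argument of Lemma~\ref{l1}, but adapted to counting assignments of $\Sigma_2$-formulae. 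Extending such a game argument from a single relation (as in Lemma~\ref{l1}) to the full power of counting relation assignments at the $\Sigma_2$ level is the main obstacle; a promising alternative is to take a function separating $\cSigmaiRel{2}$ from $\cPiiRel{2}=\cP$ in Theorem~\ref{thm-saluja-hierarchy} and to verify that a suitably low-complexity form of it already lies in $\cAC$.
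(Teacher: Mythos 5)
Your decomposition of the second bullet into the two extreme non-inclusions $\cSigmaiRel{1}\not\subseteq\cAC$ and $\cAC\not\subseteq\cSigmaiRel{2}$, using $\cSigmaiRel{1}\subseteq\cPiiRel{1}\subseteq\cSigmaiRel{2}$, is exactly the paper's strategy, and your first two parts are sound: the paper likewise handles $\cSigmaiRel{0}\subsetneq\cAC$ by declaring it analogous to Theorem~\ref{Sigma0_in_cAC} (your relational factorization $N_{\tu c}\cdot 2^{\sum_i n^{a_i}-m_{\tu c}}$ and the odd-superpolynomial witness such as $3^{|w|}$ fill this in correctly), and it proves $\mathcal{C}\not\subseteq\cAC$ precisely via $\cDnf\in\cSigmaiRel{1}$ (Example~\ref{example-3DNF}), $\cP$-completeness under $\AC$-Turing reductions (Lemma~\ref{lem-cDnf-numP-complete}), and $\mathrm{FTC}^0\neq\cP$ from $\PP=\mathsf{C}\cdot\cP\neq\TC=\mathsf{C}\cdot\mathrm{FTC}^0$.

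The genuine gap is the third part: you do not prove $\cAC\not\subseteq\cSigmaiRel{2}$, but only propose candidates and yourself flag the $\Sigma_2$-level game argument as ``the main obstacle.'' No Ehrenfeucht--Fra\"{\i}ss\'e or Hanf-locality machinery is needed, and the separating function need not have anything to do with $\mathrm{BIT}$-definability: the paper takes the trivial function $f$ on $\tString$-structures with $f(w)=1$ iff $|w|$ is even (clearly in $\cAC$) and rules out $f\in\cSigmaiRel{2}$ by preservation under substructures. Concretely, if $f$ were counted by $\phi(\tu x,\tu R)=\exists\tu u\forall\tu v\,\theta(\tu u,\tu v,\tu x,\tu R)$, pick an even $n\ge s+t+1$, fix witnesses $\tu{\bf u},\tu{\bf x},\tu{\bf R}$ with $\calA_w\models\forall\tu v\,\theta(\tu{\bf u},\tu{\bf x},\tu{\bf R})$, and delete one element $i$ occurring in neither $\tu{\bf u}$ nor $\tu{\bf x}$, restricting $\tu{\bf R}$ accordingly. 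After the existential element-witnesses are fixed, the remaining formula is universal, hence preserved in the substructure; and since the only built-in is $\leq$ (no $\mathrm{BIT}$), the resulting substructure is isomorphic to the standard structure $\calA_{w'}$ of an odd-length string, giving $f(w')\ge 1$ and a contradiction with $f(w')=0$. Note that counting plays no role beyond one surviving assignment, which is why the extension of your Lemma~\ref{l1}-style preservation idea to $\Sigma_2$ is cheap rather than hard. Your fallback --- reusing the function that separates $\cSigmaiRel{2}$ from $\cFO^\rel$ in \cite{DescCompNumP} --- does not work off the shelf, since $\#\mathrm{HAMILTONIAN}$ is not in $\cAC$; the paper's move is precisely to keep Saluja et al.'s substructure argument but replace the function by the parity-of-length function, which is where your proposal stops short.
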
 
\begin{proof}
The proof of the inclusion $\cSigmaiRel{0}\subsetneq \cAC$ is analogous to the proof of Theorem \ref{Sigma0_in_cAC} and is thus omitted. 

Next, the claim $\mathcal{C}\not \subseteq \cAC$ for $\mathcal{C}\in \{\cSigmaiRel{1},  \cPiiRel{1}, \cSigmaiRel{2} \}$ can be proven as follows: From Example~\ref{example-3DNF} we know that $\#3\mathrm{DNF}\in \mathcal{C}$ and from Lemma \ref{lem-cDnf-numP-complete} we know that \cDnf is \cP-complete under \AC-Turing-reductions. Now suppose $\#3\mathrm{DNF} \in \cAC$. Then $\cP \subseteq {\mathrm{FAC}^0}^{\cAC} \subseteq \mathrm{FTC}^0$ \cite{HaVo16}, contradicting $\mathrm{FTC}^0 \neq \cP$, which was shown in the proof of Lemma \ref{lem-Sigma1-notin-PiSk}. Hence $\#3\mathrm{DNF} \not \in  \cAC$ and $\mathcal{C}\not \subseteq \cAC$.

It remains to show $\cAC \not \subseteq \mathcal{C}$. We show this by an argument similar to the proof that $\#$HA\-MIL\-TON\-IAN is not in $\cSigmaiRel{2}$,  showing the separation of  \cSigmaiRel{2} from $\cFO$, see Theorem 2  in \cite{DescCompNumP}. We will show that a very simple function $f$ on encodings of $\tString$-structures is not in $\mathcal{C}$. Define $f$ as follows: $f(w) = 1$, if $|w|$ is even, and  $f(w) = 0$ otherwise. Obviously $f\in  \cAC$. It now suffices to show that $f\not \in  \cSigmaiRel{2}$. For contradiction, assume that $f \in  \cSigmaiRel{2}$ via a formula $\phi(\tu x,\tu R)\in \Sigma_2^\rel$, where
 \[  \phi(\tu x,\tu R)= \exists \tu u \forall \tu v \theta (\tu u, \tu v, \tu x, \tu R),    \]
and $\theta$ is a quantifier-free formula. Let $s$ and $t$ be the lengths of the tuples $\tu u $ and $\tu x$, respectively. Let $n\ge s+t+1$ be even and let $w \in \{0,1\}^n$. By the assumption, there exists $\tu {\bf u}$,  $\tu {\bf x}$, $\tu {\bf R}$ such that 
\[  \calA_w \models \forall  \tu v \theta (\tu {\bf u}, \tu {\bf x}, \tu {\bf R}). \]
By the choice of $n$, we can find $i\in \{0,..,n-1\}$ such that $i$ does not appear in the tuples $\tu {\bf u}$ and $\tu {\bf x}$. Let $\calA_{w'}$
denote the structure arising by removing the element $i$ from the structure  $\calA_{w}$, and let  $\tu {\bf R}^*$ denote the relations of 
 $\calA_{w'}$ arising by removing tuples with the element $i$ from  $\tu {\bf R}$. By closure under substructures of universal first-order formulae, it follows that 
\[  \calA_{w'} \models \forall  \tu v \theta (\tu {\bf u}, \tu {\bf x}, \tu {\bf R}^*), \]
implying that  $f(\calA_{w'})\ge 1$.  But  $|w'|$ is odd and hence  $f(\calA_{w'})=0$ contradicting the assumption that  the formula $\phi(\tu x,\tu R)$ defined the function $f$.
\end{proof}

%In the rest of this section, we state some further inclusion results between our classes and classes from the Saluja et al.{} hierarchy.

Last, we want to give another inclusion result between one of our classes and a class from the Saluja et al.{} hierarchy.

\begin{lemma}
There exists a function $F$  which is in $\cSigmaiRel{1}$ but not in $\cSigmai{1}$.
\end{lemma}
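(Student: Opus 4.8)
The plan is to exhibit an explicit function separating the two classes.
We need a function $F$ that lies in $\cSigmaiRel{1}$, i.e., can be written by counting
assignments to a free \emph{relation} variable in a $\Sigma_1$-formula (with built-in order
only), but that cannot lie in $\cSigmai{1}$, i.e., cannot be obtained by counting
assignments to free \emph{function} variables in a $\Sigma_1$-formula. The natural candidate
is the very $\#3\mathrm{DNF}$ function from Example~\ref{example-3DNF}, since that example
already establishes $\#3\mathrm{DNF}\in\cSigmaiRel{1}$ via the formula $\Phi_\cDnf(T)$. So the
whole content of the lemma is the negative direction: $\#3\mathrm{DNF}\notin\cSigmai{1}$.

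First I would record that $\#3\mathrm{DNF}$ is $\cP$-complete under $\AC$-Turing-reductions
(Lemma~\ref{lem-cDnf-numP-complete}). Then I would argue, exactly as in the proof of
Lemma~\ref{lem-Sigma1-notin-PiSk} (=\,Lemma~\ref{l2}), by a complexity-theoretic collapse.
The key structural fact to import is the observation used there (and again in the final
theorem's proof) that membership of a $\cP$-complete function in a weak counting class forces
$\cP\subseteq\mathrm{FTC}^0$: concretely, if $\#3\mathrm{DNF}\in\cSigmai{1}$ and
$\cSigmai{1}\subseteq\mathrm{FTC}^0$, then closing $\#3\mathrm{DNF}$ under $\AC$-Turing-reductions
(which are computable inside $\mathrm{FTC}^0$) would give $\cP\subseteq\mathrm{FTC}^0$,
contradicting $\mathrm{FTC}^0\neq\cP$, the latter being Allender's separation
$\TC\neq\PP$ as spelled out in the proof of Lemma~\ref{lem-Sigma1-notin-PiSk}.

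The main obstacle, therefore, is justifying the inclusion $\cSigmai{1}\subseteq\mathrm{FTC}^0$,
since the contradiction only goes through if $\cSigmai{1}$ is itself contained in a class
separated from $\cP$. From the established hierarchy~\eqref{eqInclusions} we have
$\cSigmai{1}\subsetneq\cPii{1}=\cP$, but what I actually need is the sharper bound that every
$\cSigmai{1}$-function is $\mathrm{FTC}^0$-computable. For the $\Pi_1$-prefix class this is known
($\cPiSk=\cAC\subseteq\mathrm{FTC}^0$), and for $\Sigma_1$ one expects the analogous upper bound
because a single block of existential first-order quantifiers over function witnesses, with the
smallest-witness trick, can be evaluated and the surviving function choices counted by a
constant-depth threshold computation; I would verify this upper bound carefully, as it is the
crux. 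If a direct $\mathrm{FTC}^0$ upper bound for $\cSigmai{1}$ is not readily available, an
alternative and cleaner route is to appeal to the already-proved incomparability of
$\cSigmai{1}$ and $\cPiSk=\cAC$ (Lemmas~\ref{l1} and~\ref{l2}): since $\cSigmaiRel{1}$ contains
the $\cP$-complete $\#3\mathrm{DNF}$ while $\cSigmai{1}\neq\cP$, any $\cP$-complete function
witnessing the gap lies in $\cSigmaiRel{1}$ but not in $\cSigmai{1}$.

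I would therefore write the proof in two lines: cite Example~\ref{example-3DNF} for
$\#3\mathrm{DNF}\in\cSigmaiRel{1}$, and then reuse verbatim the collapse argument of
Lemma~\ref{lem-Sigma1-notin-PiSk}, replacing $\cPiSk$ by $\cSigmai{1}$ throughout and
supplying the $\cSigmai{1}\subseteq\mathrm{FTC}^0$ step, to conclude
$\#3\mathrm{DNF}\notin\cSigmai{1}$ and hence $F=\#3\mathrm{DNF}$ separates the two classes.
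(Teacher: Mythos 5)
Your choice of witness, $F=\#3\mathrm{DNF}$, is exactly the paper's, and the positive direction (Example~\ref{example-3DNF}) is fine. But the negative direction has a genuine gap, and it sits precisely at the step you flagged as the crux: the inclusion $\cSigmai{1}\subseteq\mathrm{FTC}^0$ is not merely ``not readily available'' --- it is \emph{false}, and it is refuted by the paper's own Lemma~\ref{l2}. The function $\cDnfF$ constructed there lies in $\cSigmai{1}$ and is $\cP$-complete under $\TC$-Turing-reductions; if $\cDnfF$ were in $\mathrm{FTC}^0$ we would get $\cP\subseteq\mathrm{FTC}^0$, contradicting $\mathrm{FTC}^0\neq\cP$. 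So $\cSigmai{1}\not\subseteq\mathrm{FTC}^0$, and no collapse argument of the kind you propose can show $\#3\mathrm{DNF}\notin\cSigmai{1}$. Your fallback route fails for the same reason: from $\cSigmai{1}\subsetneq\cP$ you infer that a $\cP$-complete function cannot lie in $\cSigmai{1}$, but that inference is a non sequitur, since $\cSigmai{1}$ is not closed under the Turing reductions involved --- indeed it demonstrably \emph{does} contain a $\cP$-complete function, namely $\cDnfF$. In short, complexity-theoretic hardness cannot distinguish $\#3\mathrm{DNF}$ from members of $\cSigmai{1}$ at all.

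The separation has to be structural, and the paper's proof is in the spirit of Lemma~\ref{l1}, not Lemma~\ref{l2}: it uses preservation of $\Sigma_1$ formulae under extensions of models, combined with exact counting. Suppose $\Phi(\tu x,\tu g)\in\Sigmai{1}$ defines $\#3\mathrm{DNF}$, and let $\calA_\varphi$ with universe $\{0,\dots,n-1\}$, $n\geq 2$, have $m$ witnesses. Extending the universe by one element yields a structure encoding a DNF formula with one additional (unused) variable, so the correct value exactly doubles to $2m$; but by preservation under extensions every witness $(\tu a,\tu g_0)$ survives, and each function variable of arity $\geq 1$ extends to the larger universe in at least $n+1$ ways, so the witness count of the extended structure is at least $(n+1)m>2m$ --- a contradiction. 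The phenomenon you missed is that counting assignments to \emph{function} variables blows up multiplicatively under padding of the universe, which is an intrinsically model-theoretic obstruction; Lemma~\ref{l2} exploits exactly this blowup in the opposite direction (dividing it out via a $\TC$ reduction) to place a $\cP$-hard function \emph{inside} $\cSigmai{1}$, which is why importing its collapse argument here was bound to fail.
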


\begin{proof} We prove that $\#3\mathrm{DNF}$ is not in  $\cSigmai{1}$ though it belongs to $\cSigmaiRel{1}$.  As in Example~\ref{example-3DNF}, we use the vocabulary $\sigma_{\#3\mathrm{DNF}}=(D_0,D_1,D_2,D_3)$ and consider the vocabulary $\sigma$ extending $\sigma_{\#3\mathrm{DNF}}$ with  built-in linear order $\leq$, BIT and min. 
Suppose $\#3\mathrm{DNF}$ is definable by a $\sigma$-formula $\Phi(\tu x,\tu g)\in\Sigmai{1}$. To a given $\mathrm{DNF}$ formula, $\varphi$, with $n\geq 2$ variables,  one associates a $\sigma$-structure $\calA_{\varphi}$ such that the number $m$ of satisfying assignments of $\varphi$ is equal to
\[
m = |\{(\tu a,\tu g_0) \mid \calA_{\varphi} \models \Phi(\tu a,\tu{g}_0)\}|
\]
Let $\{0,...,n-1\}$ be the domain of  $\calA_{\varphi}$. Consider the structure $\calB$ extending  $\calA_{\varphi}$ with one additional element $n$, correctly extending the numerical predicates. Structure $\calB$ encodes a formula $\varphi'$ whose number of satisfying assignments is obviously $2m$. 
Formulas in $\Sigmai{1}$ are stable by extension, so for any fixed $(\tu a,\tu g_0)$ such that $\calA \models \Phi(\tu{a}, \tu{g}_0)$, any extension $\tu g_0'$ of $\tu g_0$ on the domain $\{0,...,n\}$ of $\calB$ is such that  $\calB \models \Phi(\tu a, \tu{g}_0')$. 
Each $g\in \tu g_0$ of arity $a\geq 1$ defined on $\{0,...,n-1\}$ can be extended on  $\{0,...,n\}$ in at least $(n+1)^{\sum_{i=1}^{a} \binom{a}{i}n^{a-i}}\geq n+1$ ways. Hence:
\[
|\{(\tu a,\tu g_0') \mid \calB \models \Phi(\tu a, \tu{g}_0')\}| \geq (n+1)m > 2m
\]
\noindent contradicting the assumption that $\Phi(\tu x,\tu g)\in\Sigmai{1}$ defines $\#3\mathrm{DNF}$.
\end{proof}

\section{Conclusion}
\label{sect-conclusion}

In this paper we have started a descriptive complexity approach to arithmetic computations. We have introduced a new framework to define arithmetic functions by counting assignments to free function variables for first-order formulae. Compared to a similar definition of Saluja et al.{} where assignments to free relational variables are counted, we obtain a hierarchy with a completely different structure, different properties and different problems. The main interest in our hierarchy is that it allows the classification of arithmetic circuit classes such as $\cAC$, in contrast to the one from Saluja et al. 

We have only started the investigation of our framework, and many questions remain open for future research: 

\begin{asparaenum}

\item 
Sipser proved a depth hierarchy within the Boolean class $\AC$ \cite{sip83b}. This hierarchy can be transfered into the context of arithmetic circuits: There is an infinite depth hierarchy within \cAC. Does this circuit hierarchy lead to a logical hierarchy within $\cPiSk$? Maybe it is possible to obtain a hierarchy defined by limiting the arity of the Skolem functions.

\item
The connection between \cAC and the variable hierarchy studied in Sect.\ref{sect-varhierarchy} is not clear. We think it would be interesting to study if \cAC is fully contained in some finite level of this hierarchy. 

\item
One of the main goals of Saluja et al.{} in their paper \cite{DescCompNumP} was to identify feasible subclasses of $\cP$. They showed that $\cSigmaiRel{0}$-functions can be computed in polynomial time, but even more interestingly, that functions from a certain higher class $\#R\Sigma_2$ allow a full polynomial-time randomized approximation scheme. Are there approximation algorithms or even schemes, maybe randomized, for some of the classes of our hierarchy?

\item
The most prominent small arithmetic circuit class besides $\cAC$ is problably the class $\cNC$ \cite{camcthvo96}. Can it be characterized in our framework or by a natural extension of it, for example by allowing generalized quantifiers? The Boolean class $\NC$ is obtained by first-order formulae with Lindström quantifiers for group word problems; i.e., we have, very informally, that $\AC=\FO$ and $\NC=\FO+\text{GROUP}$, see \cite{baimst90,HeribertBuch}.

\item
In Sect.~\ref{sect-connection}, we clarified the inclusion relation between the class \cAC and all classes of the Saluja et al.{} hierarchy, and we gave a small number of examples for (non-)inclusion results between other classes from the two different settings. We consider it interesting to extend this systematically by studying the status of all further possible inclusions between classes from our hierarchy and classes of the Saluja et al.{} hierarchy. %As we saw in Lemma~\ref{lemma-NumSigma1 in GapSigma1Rel}, also the so called gap-classes may become interesting in this context.

\item
We consider it interesting to study systematically the role of built-in relations. E.g., Saluja et al.{} define their classes using only linear order, and prove the hierarchy structure given in Theorem~\ref{thm-saluja-hierarchy}. It can be shown that by adding BIT, SUCC, min and max we obtain $\cPiiRel{1}=\cP$. How does their hierarchy change when we generally introduce SUCC or BIT?
	
\end{asparaenum}

%#Sigma_0 strictly included #Skolem-EA strictly included #AC^0 = #Pi_1^{prefix}

%what about #Sigma_1 ? (Or #Sigma_1^{prefix})
%what about hierarchy below #Pi_1 (number of universal quantifiers) and #AC0 depth hierarchy?
%what about strictness of the different inclusions without BIT (so without correspondence to #AC^0)

%sequence of inclusions/summary

\bibliography{references,cc}

\end{document}